\theoremstyle{plain}
\newtheorem{theorem}{Theorem}
\newtheorem{lemma}[theorem]{Lemma}
\theoremstyle{definition}
\theoremstyle{remark}
\def\bq{\begin{eqnarray}}
\def\eq{\end{eqnarray}}
\def\bqq{\begin{eqnarray*}}
\def\eqq{\end{eqnarray*}}
\def\nn{\nonumber}
\def\eps{\varepsilon}
\def\R{\mathbb{R}}
\def\cE {\mathcal{E}}
\def \d {{\rm d}}
\title{Blow-up profile of the focusing Gross-Pitaevskii minimizer under self-gravitating effect}
\author{Thanh Viet Phan \\
\normalsize{Applied Analysis Research Group, Faculty of Mathematics and Statistics,} \\
\normalsize{Ton Duc Thang University, Ho Chi Minh City, Vietnam}\\
\normalsize{phanthanhviet@tdt.edu.vn} 
}
\date{\normalsize\today}
\begin{document}

\maketitle


\begin{abstract} We consider a Bose-Einstein condensate in a 2D dilute Bose gas, with an external potential and an interaction potential  containing both of the short-range attractive self-interaction and the long-range self-gravitating effect. We prove the existence of minimizers and analyze their behavior when the strength of the attractive interaction converges to a critical value. The universal blow-up profile is the unique optimizer of a Gagliardo-Nirenberg interpolation inequality. 

\bigskip

\noindent {\bf MSC:}  35Q40; 46N50. \\
   
\noindent {\bf Keywords:} Bose-Einstein condensate, self-gravitating effect, concentration-compactness method, blow-up profile, Gagliardo-Nirenberg inequality.
\end{abstract}


\section{Introduction}

Since the first observation in 1995 in the Nobel Prize winning works of Cornell, Wieman, and Ketterle \cite{CorWie-95,Ketterle-95}, the Bose-Einstein condensation has been studied intensively in the last decades. When the interaction is attractive, it is a remarkable that the condensate may collapse, as noted in experiments \cite{BraSacTolHul-95,SacStoHul-98,KagMurShl-98}. In the present paper, we will study this collapse phenomenon in a rigorous model. 

We consider a Bose-Einstein condensate in a 2D dilute Bose gas, with an external potential $V:\R^2\to \R$ and an interaction potential  $\omega(x-y)$ containing both of the short-range attractive self-interaction and the long-range self-gravitating effect. For simplicity, we take 
\bq \label{eq:omega}
 \omega(x)= -a \delta_0(x) - \frac{g}{|x|}.
\eq
Here $\delta_0$ is the Dirac-delta function at $0$, $a>0$ is the strength of the attractive interaction and $g>0$ is the gravitational constant, which will be set $=1$ for simplicity (our results are valid for all $g>0$). The energy of the condensate is described by the Gross-Pitaevskii functional 
$$
\cE_a(u)= \int_{\R^2} |\nabla u(x)|^2 \d x + \int_{\R^2} V(x) |u(x)|^2 \d x - \frac{a}{2} \int_{\R^2} |u(x)|^4 \d x -\iint \frac{|u(x)|^2|u(y)|^2}{|x-y|} \d x \d y.
$$
We are interested in the existence and properties of minimizers of the minimization problem
\bq \label{eq:GP}
E(a)=\inf\Big\{\cE_a(u) \,|\, u\in H^1(\R^2), \|u\|_{L^2}=1\Big\}.
\eq
Note that by the diamagnetic inequality $|\nabla u|\ge |\nabla |u||$, we can always restrict the consideration to the case $u \ge 0$. 
\medskip

For the systems of small scales, gravity is often omitted as it is normally much weaker than other forces. However, in the context of ultra-cold gas, the self-gravitating effect has gained increasing interest in physics. In particular, it is crucially relevant to the study of an analog of a black hole in a Bose-Einstein condensate, see e.g. \cite{B1,B2,B3,B4}. In the present paper, we will explain some interesting effects  of gravity in the instability of the condensate. 

By a simple scaling argument, we can see that $E(a)=-\infty$, if and only if $a\ge a^*$, where $a^*>0$ the optimal constant in the Gagliardo-Nirenberg inequality:
\bq 
\label{eq:GN} 
\int_{\R^2} |\nabla u(x)|^2 \d x \ge \frac{a^*}{2} \int_{\R^2} |u(x)|^4  \d x, \quad \forall u\in H^1(\R^2), \|u\|_{L^2}=1.
\eq
Indeed, it is well-known that (see e.g. \cite{GidNiNir-81,Weinstein-83,MclSer-87})
\bq \label{eq:a*-Q}
a^*=\int_{\R^2} |Q|^2 
\eq
where $Q$ is the unique positive, radial solution to the nonlinear Schr\"odinger equation \bq \label{eq:Q}
-\Delta Q + Q - Q^3 =0, \quad Q\in H^1(\R^2).
\eq 
Moreover, the normalized function $Q_0=Q/\|Q\|_{L^2}$ is the unique (up to translations and dilations) optimizer for the interpolation inequality \eqref{eq:GN}. Indeed,
\bq \label{eq:Q0}
1=\int_{\R^2} |Q_0|^2 = \int_{\R^2}|\nabla Q_0|^2 = \frac{a^*}{2} \int_{\R^2}|Q_0|^4.
\eq

In \cite{GuoSei-14}, Guo and Seiringer studied the collapse phenomenon of the Bose-Einstein without the self-gravitating effect (i.e. $g=0$ in \eqref{eq:omega}). They proved that with trapping potentials like $V(x)=|x|^p$, $p>0$, the Gross-Pitaevskii minimizer always exist when $a<a^*$ and they blow-up (possibly up to translations and dilations) to $Q_0$ as $a\uparrow a^*$. More precisely, if $u_a$ is a minimizer for $E(a)$, then  
\bq \label{eq:GS}
(a^*-a)^{\frac{1}{p+2}} u_a \Big(x (a^*-a)^{\frac{1}{p+2}}\Big) \to \beta Q_0(\beta x)
\eq
strongly in $L^2(\R^2)$, where
$$
\beta =  \left( \frac{p}{2} \int_{\R^2} |x|^p |Q(x)|^2 \d x \right)^{\frac{1}{p+2}} .
$$
The result in \cite{GuoSei-14}  has been extended to other kinds of external potentials, e.g. ring-shaped potentials \cite{GuoZenZho-16}, periodic potentials \cite{WanZha-16}, and Newton-like potentials \cite{Phan-17a}. 

\medskip

In the present paper, we will consider the existence and blow-up property of the Gross-Pitaevskii minimizers  in the case of having long-range self-gravitating interaction. It turns out that the self-gravitating interaction leads to interesting effects. For example, if the external potential $V$ is not singular enough, then the self-gravitating interaction is the main cause of the instability and the details of the blow-up phenomenon are more or less irrelevant to $V$. This situation is very different from  the case without gravity studied in \cite{GuoSei-14,GuoZenZho-16,WanZha-16,Phan-17a}. The precise form of our results will be provided in the next section.

\section{Main results}

Our first result is

\begin{theorem}[Existence] \label{thm:main1} Let $V:\R^2\to \R$ satisfy one of the following three conditions:
\begin{itemize}
\item[{\rm (V1)}] {\rm (Trapping potentials)} $V\ge 0$, $V\in L^{1}_{\rm loc}(\R^2)$ and $V(x)\to \infty$ as $|x|\to \infty$;

\item[{\rm (V2)}] {\rm (Periodic potentials)} $V\in C(\R^2)$ and $V(x+z)=V(x)$ for all $x\in \R^2$, $z\in \mathbb{Z}^2$;

\item[{\rm (V3)}] {\rm (Attractive potentials)} $V\le 0$ and $V\in L^p(\R^2)+L^q(\R^2)$ with $p,q\in (1,\infty)$.
\end{itemize}
Then there exists a constant $a_*<a^*$ such that $E(a)$ in \eqref{eq:GP} has a minimizer for all $a\in (a_*,a^*)$. We can choose $a_*=0$ in cases (V1) and  (V3). Moreover, $E(a)=-\infty$ for all $a\ge a^*$. 
\end{theorem}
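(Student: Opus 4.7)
The plan is to establish coercivity giving $E(a)>-\infty$ for every $a<a^*$, then to extract a minimizer from any minimizing sequence via concentration--compactness, adapted separately to each of the three potentials. The core analytic input is that in two dimensions the self-gravitating term is strictly subcritical with respect to the kinetic energy. Indeed, by Hardy--Littlewood--Sobolev and $L^p$-interpolation,
\begin{equation*}
\iint \frac{|u(x)|^2|u(y)|^2}{|x-y|}\,\d x\,\d y \le C\,\|u\|_{L^{8/3}}^4 \le C\,\|u\|_{L^2}^2\|u\|_{L^4}^2,
\end{equation*}
and \eqref{eq:GN} gives $\|u\|_{L^4}^2\le \sqrt{2/a^*}\,\|\nabla u\|_{L^2}\|u\|_{L^2}$. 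Combining these with Young's inequality and an elementary control of $\int V|u|^2$ (trivial in (V1) and (V2); via the Sobolev embedding $H^1(\R^2)\hookrightarrow L^r$ for $r<\infty$ in (V3)) one obtains, for every $\eps>0$,
\begin{equation*}
\cE_a(u) \ge \Bigl(1-\tfrac{a}{a^*}-\eps\Bigr)\|\nabla u\|_{L^2}^2 - C_\eps,
\end{equation*}
so $E(a)>-\infty$ and every minimizing sequence is bounded in $H^1(\R^2)$, with $\int V|u_n|^2$ also controlled in case (V1).

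The non-existence when $a\ge a^*$ will follow from the scaling test $u_\lambda(x)=\lambda Q_0(\lambda(x-x_0))$. Using the homogeneities $\|\nabla u_\lambda\|_{L^2}^2=\lambda^2$, $\int|u_\lambda|^4=2\lambda^2/a^*$ and $\iint |u_\lambda(x)|^2|u_\lambda(y)|^2/|x-y|\,\d x\,\d y=\lambda\,G(Q_0)$ with $G(Q_0)>0$, together with $\int V|u_\lambda|^2=o(\lambda^2)$ under each hypothesis, one gets $\cE_a(u_\lambda)=\lambda^2(1-a/a^*)-\lambda\,G(Q_0)+o(\lambda^2)\to-\infty$ as $\lambda\to\infty$.

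Existence is then handled case by case. Under (V1), the bound on $\int V|u_n|^2$ together with $V\to\infty$ gives $L^2$-tightness of $(u_n)$, so after extraction $u_n\to u$ strongly in every $L^r$ with $r\in[2,\infty)$; the kinetic and potential terms are weakly lower semicontinuous while the quartic and gravitational terms are continuous under strong $L^{8/3}$-convergence, hence $u$ attains $E(a)$. Under (V3) I would apply Lions' concentration--compactness lemma to $|u_n|^2$: vanishing is excluded because rescaling a fixed profile by $u\mapsto\lambda u(\lambda\cdot)$ with small $\lambda$ shows $E(a)<0$ for every $a>0$ (the $-\lambda\,G$ term dominates at small scales); dichotomy is ruled out by strict sub-additivity in the mass, which comes from the strictly attractive interactions; and if the translations $y_n$ escaped to infinity then $\int V|u_n(\cdot+y_n)|^2\to 0$ (since $V\in L^p+L^q$ vanishes against any fixed $H^1$-profile under translation to infinity), which would force $\liminf\cE_a(u_n)\ge E^{V\equiv 0}(a)$, contradicting the strict inequality $E(a)<E^{V\equiv 0}(a)$ obtained by inserting a fixed profile where $V<0$.

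The hard part will be case (V2), where $V$ is merely bounded, translations are quasi-symmetries of the problem, and there is no external mechanism preventing a minimizing sequence from drifting or spreading; this is why only a range $(a_*,a^*)$ with $a_*\ge 0$ can be reached. I would again use concentration--compactness, but binding must be established by hand. Concretely, I would compare $E(a)$ with an appropriate ``energy at infinity'' (the mean value $\overline V$ produced by widely spread tests, and the value near $\min V$ produced by translated localized tests), and show that the optimally scaled concentrated profile $u_\lambda$ with $\lambda\sim 1/(1-a/a^*)$ achieves $\cE_a(u_\lambda)\sim V(x_0) - G(Q_0)^2/\bigl[4(1-a/a^*)\bigr]$, which becomes strictly smaller than any spread-out value once $a$ is close enough to $a^*$. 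This strict binding inequality yields compactness of any minimizing sequence and hence existence for $a\in(a_*,a^*)$, with $a_*$ being the threshold above which concentration strictly beats delocalization.
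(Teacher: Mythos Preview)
Your proposal is correct and tracks the paper's proof closely: the coercivity estimate via Hardy--Littlewood--Sobolev plus \eqref{eq:GN} is exactly Lemma~\ref{lem:cEa-lbs}, the scaling test for $E(a)=-\infty$ when $a\ge a^*$ is Lemma~\ref{lem:1} (the paper multiplies by a smooth cutoff so that $\int V|u_\lambda|^2$ is well defined for merely $V\in L^1_{\rm loc}$---you should do the same), and the (V1) and (V2) arguments are essentially identical, the key binding inequality in (V2) being $E(a)<\inf V$ for $a$ near $a^*$, which is what your concentrated trial state yields.

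The one genuine difference is in (V3). You run Lions' lemma with translations and then argue separately that the centers cannot escape to infinity, invoking the strict inequality $E(a)<E^{V\equiv 0}(a)$. The paper instead takes the weak $H^1$-limit $u_0$ \emph{without} translating, exploits that $u\mapsto \int V|u|^2$ is weakly continuous on $H^1$ when $V\in L^p+L^q$, and uses a Br\'ezis--Lieb decomposition $\cE_a(u_n)\ge \cE_a(u_0)+\cE_a^\infty(u_n-u_0)+o(1)$ to rule out $u_0=0$ and $\|u_0\|_{L^2}<1$. Both routes need $E(a)<E^\infty(a)$; the paper obtains it by first proving existence for $V\equiv 0$, then showing via the Euler--Lagrange equation and the strong maximum principle that the minimizer $\varphi_0>0$ everywhere, so $\int V|\varphi_0|^2<0$ whenever $V\le 0$, $V\not\equiv 0$. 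Your ``insert a fixed profile where $V<0$'' is the same idea but you should note that the strict positivity of the profile (or a translation/averaging argument) is what makes the inequality strict.
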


Except the case of trapping potentials, the proof of the existence is non-trivial. Even in the case $V\equiv 0$, the Gross-Pitaevskii functional is translation-invariant and some mass may escape to infinity, leading to the lack of compactness. The existence result will be proved by the concentration-compactness method of Lions \cite{Lions-84,Lions-84b}. 

\medskip

Now we turn to the blow-up behavior of minimizers when $a\uparrow a^*$. First, we consider the case when the negative part of the external potential $V$ has no singular point, or it has some singular points but the singularity is weak. More precisely, we will assume
\bq \label{eq:weakV}
V\in L^1_{\rm loc}(\R^2), \quad V(x) \ge -C \sum_{j\in J} \frac{1}{|x-z_j|^{p}}, \quad 0<p<1
\eq
for a finite set $\{z_j\}_{j\in J}\subset \R^2$. We have

\begin{theorem}[Blow-up for weakly singular potentials] \label{thm:main2} Assume \eqref{eq:weakV}. Then  
\bq \label{eq:Ea-weak}
\limsup_{a\uparrow a^*} E(a)(a^*-a) = -\frac{a^*}{4}\Big( \iint \frac{|Q_0(x)|^2 |Q_0(y)|^2}{|x-y|} \d x \d y\Big)^2.
\eq 
Moreover, let $a_n\uparrow a^*$ and let $u_n \ge 0$ be an approximate minimizer for $E(a_n)$, i.e. $\cE_{a_n}(u_n)/E(a_n)\to 1$. Then there exist a subsequence $u_{n_k}$ and a sequence $\{x_k\}\subset \R^2$ such that
\bq \label{eq:weakV-cVu}
\lim_{k\to \infty} (a^*-a_{n_k}) u_{n_k}\big(x_k+ (a^*-a_{n_k})x\big)= \beta Q_0(\beta x) 
\eq
strongly in $H^1(\R^2)$ where
$$
\beta=\frac{a^*}{2} \iint \frac{|Q_0(x)|^2 |Q_0(y)|^2}{|x-y|} \d x \d y.
$$
Finally, if $V(x)=|x|^q$ for $q>0$,  or $V(x)=-|x|^{-q}$ for $0<q<1$, and if $u_n$ is a minimizer for $E(a_n)$ (which exists by Theorem \ref{thm:main1}), then the convergence \eqref{eq:weakV-cVu} holds true for the whole sequence and for $x_k=0$, i.e. 
\bq \label{eq:limimim}  \lim_{n\to \infty} (a^*-a_{n}) u_{n}\big( (a^*-a_{n})x\big)=\beta Q_0(\beta x). 
\eq
\end{theorem}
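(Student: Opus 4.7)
The plan is a blow-up analysis in the spirit of \cite{GuoSei-14}, with the new feature that the long-range self-gravitating term $D(u):=\iint |u(x)|^2|u(y)|^2/|x-y|\,\d x\,\d y$, rather than the external potential, drives the blow-up at rate $\eps_a^{-1}$ with $\eps_a:=a^*-a$. To obtain the upper bound in \eqref{eq:Ea-weak}, I would test $\cE_a$ against the $L^2$-normalized trial state $u_\tau(x):=\tau Q_0(\tau x)$. Using \eqref{eq:Q0}, the kinetic and attractive terms combine to $\tau^2\eps_a/a^*$, the gravitational term equals $-\tau c_0$ with $c_0:=\iint|Q_0|^2|Q_0|^2/|x-y|\,\d x\,\d y$, and, after centering the trial state at a Lebesgue point of $V$, the external contribution is $o(\tau)$ under \eqref{eq:weakV} (its negative part is $O(\tau^p)$ with $p<1$, its positive part is $O(1)$). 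Optimizing in $\tau$ at $\tau_a:=c_0 a^*/(2\eps_a)$ yields $E(a)\le -c_0^2a^*/(4\eps_a)+o(1/\eps_a)$. Note that this same trial state, after the rescaling in \eqref{eq:weakV-cVu}, tends to $\beta Q_0(\beta\cdot)$ with $\beta=\tau_a\eps_a=c_0a^*/2$, giving the expected limit profile.

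For approximate minimizers $u_n\ge 0$ of $E(a_n)$ with $\eps_n\downarrow 0$ and $\lambda_n:=\|\nabla u_n\|_{L^2}$, inequality \eqref{eq:GN} gives $\int|\nabla u_n|^2-(a_n/2)\int|u_n|^4\ge(\eps_n/a^*)\lambda_n^2$; the Hardy-Littlewood-Sobolev inequality (with exponents $4/3,4/3,1$) combined with $L^{8/3}$-interpolation and \eqref{eq:GN} yields $D(u_n)\le C\lambda_n$; and a Hardy-type scaling estimate based on \eqref{eq:weakV} gives $\int V|u_n|^2\ge -C\lambda_n^p=o(\lambda_n)$. Comparing with the upper bound pins $\lambda_n=\Theta(1/\eps_n)$. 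I then introduce the rescaled family
\[ w_n(x) := \eps_n u_n(x_n+\eps_n x),\]
where $x_n$ is selected by a best-shift argument (maximizing $\int_{B(x_n,R\eps_n)}|u_n|^2$ for a large fixed $R$). Then $\|w_n\|_{L^2}=1$ and $\{w_n\}$ is bounded in $H^1$, so a subsequence satisfies $w_n\wto w_0$ in $H^1$.

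Applying Lions' concentration-compactness method to $|w_n|^2$, vanishing is excluded by the best-shift construction together with the fact that $D(w_n)$ must stay bounded away from $0$, and dichotomy is excluded because $D$ is super-additive under mass splittings and splitting would cost leading-order energy. Hence $w_n\to w_0$ strongly in $L^p(\R^2)$ for $2\le p<\infty$ with $\|w_0\|_{L^2}=1$. Decomposing the rescaled energy as
\[\eps_n\cE_{a_n}(u_n)=\eps_n^{-1}\bigl(\|\nabla w_n\|_{L^2}^2-\tfrac{a^*}{2}\|w_n\|_{L^4}^4\bigr)+\tfrac12\|w_n\|_{L^4}^4-D(w_n)+\eps_n\!\int V(x_n+\eps_n\cdot)|w_n|^2,\]
the first term on the right is $\eps_n^{-1}$ times the nonnegative Gagliardo-Nirenberg deficit of $w_n$, and the last is $o(1)$. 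Since the left-hand side is $\le -c_0^2 a^*/4+o(1)$ by the upper bound and the middle terms converge by strong $L^p$ convergence, the GN deficit of $w_n$ must be $O(\eps_n)$ and in particular tends to $0$. Lower semicontinuity then forces $w_0$ to saturate \eqref{eq:GN}, so $w_0=\mu Q_0(\mu\cdot)$ for some $\mu>0$ (absorbing the translation into $x_n$). Passing to the limit, $\lim\eps_n\cE_{a_n}(u_n)=\mu^2/a^*-\mu c_0\ge -c_0^2a^*/4$ with equality iff $\mu=\beta$; combined with the upper bound this forces $\mu=\beta$, yielding $\lim E(a)(a^*-a)=-c_0^2a^*/4$ and \eqref{eq:weakV-cVu} (strong $H^1$ convergence follows from matching $H^1$ norms).

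For $V(x)=|x|^q$, $q>0$, or $V(x)=-|x|^{-q}$, $0<q<1$, $V$ is radial with unique minimum at $0$. The Pólya-Szegő inequality, the Riesz rearrangement inequality applied to $D$, and the monotone rearrangement inequality for $V$ together show that the symmetric-decreasing rearrangement of a nonnegative minimizer $u_n$ does not increase $\cE_{a_n}$; hence minimizers may be taken radially symmetric and decreasing, so $x_n=0$ works and \eqref{eq:limimim} follows from the previous step. The main obstacle will be the concentration-compactness step: ruling out dichotomy requires exploiting the super-additivity (in mass splittings) of the nonlocal gravitational term, and identifying the exact dilation scale $\beta=c_0a^*/2$ rather than a generic $\mu>0$ requires combining the rescaled energy identity with the explicit one-parameter optimization along dilations of $Q_0$.
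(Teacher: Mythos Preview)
Your overall strategy---energy method with rescaling $w_n(x)=\eps_n u_n(x_n+\eps_n x)$, identification of the limit as a Gagliardo--Nirenberg optimizer, and matching of the dilation scale via the one-parameter optimization $\mu\mapsto \mu^2/a^*-\mu c_0$---is exactly the paper's approach. The paper orders the steps slightly differently: it first proves directly from the a-priori bounds that the rescaled functions form an optimizing sequence for \eqref{eq:GN} (this is Lemma~\ref{lem:lem9}, which gives $\int|\nabla u_a|^2-(a/2)\int|u_a|^4\le C\ell_a$, hence the rescaled deficit is $O(\ell_a^{-1})$), and only then invokes a packaged compactness lemma for such optimizing sequences (Lemma~\ref{lem:min-GN}). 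You instead run concentration--compactness first and extract the vanishing of the GN deficit afterwards from the rescaled energy identity. Both orderings lead to the same conclusion.

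There is, however, a genuine gap in your exclusion of dichotomy. You attribute it to ``$D$ is super-additive under mass splittings''; but in the dichotomy scenario the two pieces have supports moving apart, so the cross term $2\iint |w_n^{(1)}(x)|^2|w_n^{(2)}(y)|^2/|x-y|\,\d x\,\d y\to 0$ and $D(w_n)$ decomposes \emph{additively} in the limit---no leading-order energy is lost in $D$ by splitting. What actually kills dichotomy is the Gagliardo--Nirenberg inequality applied to each piece with sub-unit mass: if $\|w_n^{(1)}\|_{L^2}^2\to\lambda\in(0,1)$ then $\|\nabla w_n^{(1)}\|_{L^2}^2-(a^*/2)\|w_n^{(1)}\|_{L^4}^4\ge (a^*/2)(\lambda^{-1}-1)\|w_n^{(1)}\|_{L^4}^4$, and since your a-priori bounds already give $\|w_n\|_{L^4}^4\ge c>0$, the total GN deficit stays bounded below by a positive constant; multiplied by $\eps_n^{-1}$ this blows up and contradicts the upper bound. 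This is precisely the mechanism in the paper's Lemma~\ref{lem:min-GN}. A second, minor point: for the upper bound you test with $u_\tau=\tau Q_0(\tau\cdot)$ without a cutoff, asserting the potential contribution is $O(1)$ for the positive part. Under \eqref{eq:weakV} one only has $V\in L^1_{\rm loc}$, so $\int V_+|u_\tau|^2$ need not even be finite; the paper inserts a compactly supported cutoff $\varphi$ for exactly this reason (Lemma~\ref{lem:1}).
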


We observe that in Theorem \ref{thm:main2} the details of the blow-up phenomenon is essentially irrelevant to $V$. 
This is an interesting effect of the self-gravitating interaction. In the case without gravity studied in \cite{GuoSei-14,GuoZenZho-16,WanZha-16,Phan-17a}, the blow-up behavior depends crucially on the local behavior of $V$ around its minimizers/singular points, which can be seen from \eqref{eq:GS}. Heuristically, if the condensate shrinks with a length scale $\eps\to 0$, then the self-gravitating interaction is of order $\eps^{-1}$, while the external potential is of order at most $\eps^{-p}$ (since $V$ is not singular than $|x|^{-p}$).  Therefore, the contribution of the  external potential can be ignored to the leading order. 

\medskip

Now we come to the case when the external potential is more singular. We will assume 
\bq \label{eq:AS-V}
V(x)= - h(x) \sum_{j=1}^J  |x-z_j|^{- p_j}
\eq
with a finite set $\{z_j\}_{j\in J}\subset \R^2$, with
$$0<p_j<2, \quad p=\max_{j\in J} p_j\ge 1$$
and with
$$h\in C(\R^2), \quad C\ge  h \ge 0,\quad h_0=\max\{h(z_j):p_j=p\}>0.$$
Let us denote the set
$$
\mathcal{Z}=\{z_j: p_j=p, h_j=h_0\}
$$
which contains the most singular points of $V$. We have 

\begin{theorem}[Blow-up for strongly singular potentials] \label{thm:main3} Assume \eqref{eq:AS-V}. Then 
\begin{align} \label{eq:Ea-asym}
 \lim_{a\uparrow a^*} E(a) (a^*-a)^{\frac{p}{2-p}} = \frac{\beta^2}{a^*}-\beta^p A
 \end{align}
where
\begin{equation} \label{eq:betaAAA}
A=
\begin{cases} 
\, h_0 \int_{\R^2}  \frac{|Q_0(x)|^2}{|x|^{p}} \d x + \iint \frac{|Q_0(x)|^2 |Q_0(y)|^2}{|x-y|} \d x \d y , &\quad \text{if}\,\, p=1\\
\,h_0 \int_{\R^2}  \frac{|Q_0(x)|^2}{|x|^{p}} \d x. &\quad \text{if}\,\, p>1 \\
\end{cases}
\end{equation}
and
$$
\beta = \left(\frac{pa^*A}{2}\right)^{\frac{1}{p-2}}.
$$
Moreover,  let $a_n\uparrow a^*$ and let $u_n \ge 0$ be an approximate minimizer for $E(a_n)$, i.e. $\cE_{a_n}(u_n)/E(a_n)\to 1$. Then there exist a subsequence $u_{n_k}$ and a point $x_0\in \cal{Z}$ such that 
\bq \label{eq:cv-p>=1}
\lim_{n\to \infty} (a^*-a_{n_k})^{\frac{1}{2-p}} u_{a_{n_k}} \Big( x_{0} + x(a^*-a_n)^{\frac{1}{2-p}} \Big) = \beta Q_0(\beta x)
\eq
strongly in $H^1(\R^2)$. Finally, if $\mathcal{Z}$ has a unique element, then the convergence \eqref{eq:cv-p>=1} holds for the whole sequence $\{u_n\}$.  
\end{theorem}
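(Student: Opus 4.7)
The plan is to sandwich $E(a)(a^*-a)^{p/(2-p)}$ between matching upper and lower bounds and then to extract strong $H^{1}$ compactness of the rescaled sequence $\tilde u_n(x)=\eps_n u_n(\eps_n x+y_n)$ with $\eps_n=(a^*-a_n)^{1/(2-p)}$. For the upper bound I would insert the trial state $u_\tau(x)=\tau Q_0\bigl(\tau(x-x_0)\bigr)$ with $x_0\in\mathcal{Z}$ into $\cE_a$. By \eqref{eq:Q0} the kinetic minus quartic contribution equals $\tau^{2}(a^*-a)/a^*$; the substitution $y=\tau(x-x_0)$ turns the potential integral into $\int V(x_0+y/\tau)|Q_0(y)|^{2}\,\d y$, whose leading behaviour near $x_0$ is $-\tau^{p}h_0\int|Q_0|^{2}/|y|^{p}\,\d y+o(\tau^{p})$ (contributions from other $z_j$ are strictly lower order by the definition of $\mathcal{Z}$), while the Newtonian term scales exactly as $\tau\iint|Q_0|^{2}|Q_0|^{2}/|x-y|\,\d x\,\d y$. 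For $p>1$ only the first two balance; for $p=1$ the Newtonian term joins them, producing the two cases in \eqref{eq:betaAAA}. Optimising over $\tau$ yields $\tau=\beta(a^*-a)^{-1/(2-p)}$ and the claimed value $\beta^{2}/a^*-\beta^{p}A$.

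For the lower bound and compactness, I would combine \eqref{eq:GN} with the subcritical Hardy bound $\int|u|^{2}|x-z_j|^{-p}\le C\|\nabla u\|_{L^{2}}^{p}$ (valid for $0<p<2$ when $\|u\|_{L^{2}}=1$) and the HLS--Gagliardo--Nirenberg estimate $\iint|u|^{2}|u|^{2}/|x-y|\le C\|\nabla u\|_{L^{2}}$. These give
\[
\cE_{a}(u)\;\ge\;\frac{a^*-a}{a^*}\|\nabla u\|_{L^{2}}^{2}-C\|\nabla u\|_{L^{2}}^{p}-C\|\nabla u\|_{L^{2}},
\]
which, combined with the upper bound, forces $\|\nabla u_n\|_{L^{2}}\asymp\eps_n^{-1}$, so $\tilde u_n$ is bounded in $H^{1}$. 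I would choose $y_n$ via Lions' concentration lemma to prevent vanishing (vanishing would kill the quartic norm and contradict $\cE_{a_n}(u_n)<0$) and rule out dichotomy by a direct comparison of the split energy against the single-bubble upper bound. Passing to a weak limit $\tilde u_n\rightharpoonup\tilde u_0$ in $H^{1}$ and multiplying the energy identity by $\eps_n^{p}$, one obtains
\[
\liminf_{n}\eps_n^{p}\,\cE_{a_n}(u_n)\;\ge\;\frac{\|\nabla\tilde u_0\|_{L^{2}}^{2}}{a^*}-h_0\!\int\!\frac{|\tilde u_0(x)|^{2}}{|x|^{p}}\,\d x-\mathbf{1}_{p=1}\!\iint\!\frac{|\tilde u_0(x)|^{2}|\tilde u_0(y)|^{2}}{|x-y|}\,\d x\,\d y,
\]
whenever $y_n\to x_0\in\mathcal{Z}$. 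Matching this with the upper bound forces $\tilde u_0$ to saturate \eqref{eq:GN}, so by uniqueness of the optimiser $\tilde u_0=\beta Q_0\bigl(\beta(\cdot-x')\bigr)$; optimality of the limit functional pins down $\beta$ as stated. Absorbing the translation $x'$ into a redefinition of $y_n$ and using the convergence of the kinetic norms upgrades weak to strong $H^{1}$-convergence.

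The main obstacle is to pin the concentration centre to $\mathcal{Z}$ while simultaneously excluding dichotomy, because the long-range Newtonian kernel couples distant masses and a priori could redistribute mass between a bubble in $\mathcal{Z}$ and one at infinity. The resolution is a strict energetic comparison: for mass $1$ split into two profiles of masses $\theta$ and $1-\theta$ supported in disjoint regions, the Newtonian self-interaction loses a factor of $\theta^{2}+(1-\theta)^{2}<1$, and placing mass away from $\mathcal{Z}$ strictly reduces the attractive $V$-contribution (because $h(z_j)<h_0$ or $p_j<p$ outside $\mathcal{Z}$). These strict gaps beat the single-bubble upper bound and therefore force both the absence of dichotomy and $y_n\to\mathcal{Z}$. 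When $\mathcal{Z}=\{x_0\}$ is a singleton the same argument applied to the whole sequence, rather than a subsequence, yields the final claim.
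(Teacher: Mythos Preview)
Your upper bound via the trial state $\tau Q_0(\tau(\cdot-x_0))$ matches the paper's argument (the paper inserts a smooth cut-off so as to localise $V$ rigorously, but the computation is the same), and your a-priori bound $\|\nabla u_n\|_{L^2}\asymp\eps_n^{-1}$ is also what the paper proves. The substantive divergence is in how compactness is obtained and how the centre is pinned to $\mathcal Z$.

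The paper does \emph{not} run Lions' concentration--compactness on the full rescaled energy. Instead it first proves the extra a-priori estimate $\int V|u_a|^2\le -\ell_a^p/C$, which immediately singles out an index $i$ with $\int|u_a|^2|x-z_i|^{-p_i}\ge\ell_a^p/C$. Centring at that fixed $z_i$ and rescaling gives a sequence $\varphi_a$ with $\|\nabla\varphi_a\|_{L^2}\asymp 1$ and $\int|\nabla\varphi_a|^2-\frac{a}{2}\int|\varphi_a|^4\to 0$; i.e.\ $\varphi_a$ is an \emph{optimizing} sequence for \eqref{eq:GN}. Strong $H^1$ convergence (up to bounded translations $x_a$) then comes for free from the standard compactness of GN-optimizers (the paper's Lemma~\ref{lem:min-GN}). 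The facts $p_i=p$, that $x_a$ stays bounded, and finally $h(z_i)=h_0$, $x_\infty=0$, all drop out a posteriori from the potential lower bound and the sharp energy matching with a rearrangement inequality.

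Your route --- pick $y_n$ by Lions, exclude vanishing/dichotomy for the full functional, then argue $y_n\to\mathcal Z$ --- is workable in principle but your dichotomy step has a genuine gap. The claim that ``the Newtonian self-interaction loses a factor $\theta^2+(1-\theta)^2$'' is not what rules out splitting: after rescaling, the kinetic-minus-quartic block is of order $\eps_n^{-2}$ while the Newtonian term is only $\eps_n^{-1}$, so the Newtonian deficit is invisible at leading order. What actually fails under a mass split is the Gagliardo--Nirenberg balance, since each piece of mass $\theta<1$ satisfies $\int|\nabla f|^2\ge\frac{a^*}{2\theta}\int|f|^4$ with the strictly better constant $\frac{a^*}{2\theta}$. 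In other words, the obstruction to dichotomy is exactly the content of Lemma~\ref{lem:min-GN}; once you observe from your own a-priori bound that $\tilde u_n$ is GN-optimizing, you get strong $H^1$ compactness without any separate dichotomy argument, and the centre identification then reduces to the potential estimate rather than the vaguer energetic comparison you outline.
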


In contrast of Theorem \ref{thm:main2}, Theorem \ref{thm:main3} says that if the external potential $V$ is singular enough, then its local behavior close its singular points determines the details of the blow-up profile. In particular, when $p>1$,  the convergences \eqref{eq:Ea-asym} and \eqref{eq:cv-p>=1} are similar to the results in \cite{Phan-17a} when there is no gravity term included in the energy functional. This means that in this case, the effect of the self-gravitating interaction is negligible to the leading order. 

\medskip

Note that the proof of the blow-up result \eqref{eq:GS} in \cite{GuoSei-14} is based on the analysis of the Euler-Lagrange equation associated to the minimizers. This approach has been used also in follow-up papers \cite{GuoZenZho-16,WanZha-16}. In the present paper, we will use another approach, which has the advantage that we can treat approximate minimizers as well (in principle there is no Euler-Lagrange equation for an approximate minimizer). 

\medskip

More precisely, as in \cite{Phan-17a}, we will prove the blow-up results by the energy method. First, we prove that approximate minimizers must be an optimizing sequence for the Gagliardo-Nirenberg inequality \eqref{eq:GN}. Then by the concentration-compactness argument, up to subsequences, translations and dilations, this sequence converges to an optimizer for \eqref{eq:GN}, which is of the form $bQ_0(bx+x_0)$ for some $b>0$ and $x_0\in \R^2$. Then we determine $b$ and $x_0$ by matching the asymptotic formula for $E(a)$. 
 
\medskip

{\bf Heuristic argument.}  Now let us explain the heuristic ideas of our analysis of the blow-up phenomenon. For simplicity, consider the case $V(x)=-|x|^{-p}$ with $0<p<2$ and the energy functional becomes
$$
\cE_a(u)= \int_{\R^2} |\nabla u(x)|^2 \d x - \int_{\R^2} \frac{ |u(x)|^2}{|x|^p} \d x - \frac{a}{2} \int_{\R^2} |u(x)|^4 \d x -\iint \frac{|u(x)|^2|u(y)|^2}{|x-y|} \d x \d y.
$$
If the minimizer $u_a$ of $\cE_a(u)$ converges to $Q_0$ under the length-scaling $\ell$, i.e. $u_a(x)\approx \ell Q_0(\ell x)$, then 
\begin{align} \label{eq:heu}
\cE_a(u_a) &\approx \cE_a(\ell Q_0 (\ell \cdot)) = \ell^2  \int_{\R^2}|\nabla Q_0(x)|^2 \d x  -\ell^p \int_{\R^2} \frac{|Q_0(x)|^2}{|x|^p} \d x \nn\\
&\qquad - \frac{a \ell^2 }{2}\int_{\R^2} |Q_0(x)|^4\d x - \ell \iint_{\R^2 \times \R^2} \frac{|Q_0(x)|^2 |Q_0(y)|^2}{|x-y|} \d x \d y\nn\\
&= \ell^2 \Big(1- \frac{a}{a^*}\Big) -\ell^p \int_{\R^2} \frac{|Q_0(x)|^2}{|x|^p} \d x - \ell \iint_{\R^2 \times \R^2} \frac{|Q_0(x)|^2 |Q_0(y)|^2}{|x-y|} \d x \d y.
\end{align}
We want to choose $\ell$ to minimize  the ride side of \eqref{eq:heu} (since $u_a$ minimizes $\cE_a(u)$). It is not hard to guess that when $a\uparrow a^*$, then $\ell \to \infty$ and the exact behavior of $\ell$ depends on $p$ as follows.
\begin{itemize}
\item If $p<1$ then $\ell^p\ll \ell$, and the term of order $\ell^p$ does not contribute to the leading order. The value of $\ell$ is essentially  determined by minimizing
$$
\ell^2 \Big(1- \frac{a}{a^*}\Big) -  \ell \iint_{\R^2 \times \R^2} \frac{|Q_0(x)|^2 |Q_0(y)|^2}{|x-y|} \d x \d y,
$$
i.e.
$$
\ell \approx \frac{a^*}{2(a^*-a)} \iint_{\R^2 \times \R^2} \frac{|Q_0(x)|^2 |Q_0(y)|^2}{|x-y|} \d x \d y.
$$
This is the situation covered in Theorem \ref{thm:main2}. 

\item If $p>1$, then $\ell^p\gg \ell$, and the value of $\ell$ is essentially determined by minimizing
$$
\ell^2 \Big(1- \frac{a}{a^*}\Big) -  \ell^p \int_{\R^2} \frac{|Q_0(x)|^2}{|x|^p} \d x
$$
i.e.
$$
\ell \approx \left[ \frac{a^* p}{2(a^*-a)} \int_{\R^2}\frac{|Q_0(x)|^2}{|x|^p} \d x \right]^{\frac{1}{2-p}}.
$$
On the other hand, if  $p=1$, then $\ell^p=\ell$ and the value of $\ell$ is essentially determined by minimizing
$$
\ell^2 \Big(1- \frac{a}{a^*}\Big) -  \ell \Big[ \int_{\R^2} \frac{|Q_0(x)|^2}{|x|} \d x + \iint_{\R^2 \times \R^2} \frac{|Q_0(x)|^2 |Q_0(y)|^2}{|x-y|} \d x \d y \Big]
$$
i.e.
$$
\ell \approx \frac{a^*}{2(a^*-a)} \Big[ \int_{\R^2} \frac{|Q_0(x)|^2}{|x|} \d x  + \int_{\R^2 \times \R^2} \frac{|Q_0(x)|^2 |Q_0(y)|^2}{|x-y|} \d x \d y \Big]
$$
These situations are covered in Theorem \ref{thm:main3}. \end{itemize}

\medskip

{\bf Varying the gravitation constant.} As mentioned, here we consider the interaction of the form \eqref{eq:omega}, with the gravitation constant $g=1$ for simplicity. Clearly, our results hold for any constant $g>0$ (independent of $a$), up to easy modifications. It might be interesting to ask what happens when $g\to 0$ or $g\to \infty$, at the same time as $a\to a^*$. By following the above heuristic discussion, if $V(x)=-|x|^{-p}$ with $0<p<2$ and $u_a(x)\approx \ell Q_0(\ell x)$, then 
\begin{align} \label{eq:heu-2}
\cE_a(u_a) \approx  \ell^2 \Big(1- \frac{a}{a^*}\Big) -\ell^p \int_{\R^2} \frac{|Q_0(x)|^2}{|x|^p} \d x - \ell g \iint_{\R^2 \times \R^2} \frac{|Q_0(x)|^2 |Q_0(y)|^2}{|x-y|} \d x \d y.
\end{align}
Next we minimize the right side of \eqref{eq:heu-2}. If $\ell g \ll \ell^p$, namely $g\ll \ell^{p-1}$, then we can ignore the gravitation term and the optimal value of $\ell$ is $\sim (a^*-a)^{-\frac{1}{2-p}}$. This suggests that the threshold for the gravitation effect to be visible in the blow-up profile is $g\sim (a^*-a)^{\frac{1-p}{2-p}}$.  More precisely, we can expect the following:
\begin{itemize}
\item If $g\gg (a^*-a)^{\frac{1-p}{2-p}}$, then the contribution of the $\ell^p$-term can be ignored to the leading order, and 
$$
\ell \approx \frac{a^*g}{2(a^*-a)} \iint_{\R^2 \times \R^2} \frac{|Q_0(x)|^2 |Q_0(y)|^2}{|x-y|} \d x \d y.
$$
Thus the blow-up profile is determined completely by the gravitation term if either $p<1$ and $g\to 0$ slowly enough, or $p\ge 1$ and $g\to +\infty$ fast enough. 

\item  If $g\ll (a^*-a)^{\frac{1-p}{2-p}}$, then the contribution of the $\ell g$-term can be ignored to the leading order, and 
$$
\ell \approx \left[ \frac{a^* p}{2(a^*-a)} \int_{\R^2}\frac{|Q_0(x)|^2}{|x|^p} \d x \right]^{\frac{1}{2-p}}.
$$
Thus the blow-up profile is determined only by the attractive potential term (i.e. no gravitational effect) if either $p\le 1$ and $g\to 0$ fast enough, or  $p> 1$ and $g\to +\infty$ slow enough. 

\item If $g\sim (a^*-a)^{\frac{1-p}{2-p}}$, then both potential term and gravitation term enter the determination of the blow-up profile. 

\end{itemize}

Although our representation will focus on the case when $g$ is independent of $a$, as stated in Theorem \ref{thm:main2} and Theorem \ref{thm:main3}, the interested reader can prove the above assertions when $g$ is dependent on $a$ by following our analysis below. 

\medskip

{\bf Organization of the paper.} In the rest, we will prove Theorems \ref{thm:main1}, \ref{thm:main2}, \ref{thm:main3}  in Sections  \ref{sec:main1}, \ref{sec:main2}, \ref{sec:main3}, respectively. Also, for the reader's convenience, we recall in Appendix \ref{sec:app} the Concentration-Compactness Lemma and a standard result on the compactness of the optimizing sequences for the Gagliardo-Nirenberg inequality \eqref{eq:GN}, which are useful in our proof. 

\medskip

{\bf Acknowledgement.} I thank the referee for the interesting suggestion of considering the case when the gravitational constant $g$ depends on $a$, leading to an improvement on the representation of the paper. 

\section{Existence}\label{sec:main1}

In this section we prove Theorem \ref{thm:main1}. We will always denote by $C$ a universal, large constant.

We start with a preliminary result, which is the upper bound in \eqref{eq:Ea-weak}.

\begin{lemma} \label{lem:1} For all $V\in L^1_{\rm loc}(\R^2)$, we have the upper bound
\bq 
\label{eq:lima}
\limsup_{a\uparrow a^*} E(a)(a^*-a) \le -\frac{a^*}{4}\Big( \iint \frac{|Q_0(x)|^2 |Q_0(y)|^2}{|x-y|} \d x \d y\Big)^2.
\eq 
\end{lemma}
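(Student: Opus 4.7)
The plan is to construct a trial state that realises the heuristic optimisation sketched before the lemma. Writing $D=\iint|Q_0(x)|^2|Q_0(y)|^2/|x-y|\,\d x\,\d y$, the prediction is that the optimal length scale is $\ell(a)=a^*D/(2(a^*-a))\to\infty$ as $a\uparrow a^*$, so I will test $E(a)$ against a scaled and translated copy of $Q_0$ centred at a suitable point $x_0\in\R^2$. To control the external potential $V\in L^1_{\rm loc}$, I localise by a smooth cutoff $\chi\in C_c^\infty(\R^2)$ with $\chi\equiv 1$ on $B(0,1/2)$ and $\supp\chi\subset B(0,1)$, and take
\[
u_\ell(x) = c_\ell\,\ell\,\chi(x-x_0)\,Q_0\bigl(\ell(x-x_0)\bigr),
\]
with $c_\ell>0$ chosen so that $\|u_\ell\|_{L^2}=1$. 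The role of $\chi$ is purely to keep $\supp u_\ell\subset B(x_0,1)$, where $V$ is integrable; the exponential decay of $Q_0$ then makes inserting $\chi$ cost only exponentially small errors in the scale-invariant integrals.

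Next I would compute each scale-invariant term in $\cE_a(u_\ell)$ by the change of variables $y=\ell(x-x_0)$, using the identities \eqref{eq:Q0}, dominated convergence for $\chi(y/\ell)\to 1$, and the exponential decay of $Q_0$ and $\nabla Q_0$. This produces $c_\ell=1+o(1)$, $\|\nabla u_\ell\|_{L^2}^2=\ell^2(1+o(1))$, $\tfrac{a}{2}\|u_\ell\|_{L^4}^4=(a/a^*)\ell^2(1+o(1))$, and $\iint|u_\ell(x)|^2|u_\ell(y)|^2/|x-y|\,\d x\,\d y=\ell(D+o(1))$ as $\ell\to\infty$, so that
\[
\cE_a(u_\ell) = \ell^2\,\frac{a^*-a}{a^*} - \ell\,D + \int_{\R^2} V\,|u_\ell|^2\,\d x + o(1).
\]

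The delicate step is showing that the $V$-integral remains merely $O(1)$ in $\ell$, and this is where the choice of $x_0$ matters. I would pick $x_0$ to be a Lebesgue point of $V$, of which there are a.e.\ since $V\in L^1_{\rm loc}$. The positive measure $\ell^2\chi(x-x_0)^2 Q_0(\ell(x-x_0))^2\,\d x$ has total mass $c_\ell^{-2}\to 1$, is supported in $B(x_0,1)$, and concentrates at $x_0$ with the integrable profile $|Q_0|^2$. A standard Lebesgue-point argument (splitting into a small ball around $x_0$, where the averaged oscillation of $V$ is controlled by the Lebesgue-point property, and an outer piece where the density is exponentially small by the decay of $Q_0$) then yields $\int V|u_\ell|^2\,\d x\to V(x_0)$. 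I expect this to be the main technical obstacle, since one must handle an $L^1_{\rm loc}$ potential tested against an approximate identity whose profile $|Q_0|^2$ has exponential rather than compact support.

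Finally I plug $\ell=\ell(a)$ into the identity for $\cE_a(u_\ell)$: the choice $\ell(a)=a^*D/(2(a^*-a))$ is the minimiser in $\ell$ of $\ell^2(a^*-a)/a^*-\ell D$ and attains the value $-a^*D^2/(4(a^*-a))$. Therefore $E(a)\le\cE_a(u_{\ell(a)})\le -a^*D^2/(4(a^*-a))+O(1)$, and multiplying by $(a^*-a)$ and taking $\limsup_{a\uparrow a^*}$ gives the claimed inequality.
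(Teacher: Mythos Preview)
Your proposal is correct and follows essentially the same approach as the paper: a cutoff trial state $u_\ell(x)=A_\ell\,\varphi(x-x_0)\,\ell\,Q_0(\ell(x-x_0))$, exponential decay of $Q_0$ and $\nabla Q_0$ to control the cutoff errors, an approximate-identity/Lebesgue-point argument to show $\int V|u_\ell|^2\to V(x_0)$ for a.e.\ $x_0$, and finally optimisation over the length scale $\ell\sim(a^*-a)^{-1}$. One remark on bookkeeping: writing the kinetic and quartic terms as $\ell^2(1+o(1))$ and $(a/a^*)\ell^2(1+o(1))$ is not quite enough to conclude that their difference is $\ell^2(a^*-a)/a^*+o(1)$, since a multiplicative $o(1)$ at order $\ell^2$ could swamp the leading term; the paper (and your own invocation of exponential decay) gives the stronger additive estimate $\ell^2+O(\ell^{-\infty})$, which is what you should record.
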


\begin{proof} As in \cite{GuoSei-14} we use the trial function
$$
u_\ell(x)=A_\ell \varphi(x-x_0) \ell Q_0(\ell(x-x_0))
$$
where $x_0\in \R^2$, $0\le \varphi \in C_c^\infty (\R^2)$ with $\varphi(x)=1$ for $|x|\le 1$, and $A_\ell>0$ is a normalizing factor to ensure $\|u\|_{L^2}=1$. 

Using \eqref{eq:Q0} and the fact that both $Q_0$ and $|\nabla Q_0|$ are exponentially decay (see \cite[Proposition 4.1]{GidNiNir-81}), we have 

\begin{align*}
A_\ell&=\Big(\int_{\R^2} |\varphi(x)|^2 \ell^2 |Q_0(\ell x)|^2 \d x \Big)^{-1/2}= 1+O(\ell^{-\infty}),\\
\int |\nabla u_\ell|^2 &= \ell^2 \int |\nabla Q_0|^2 + O(\ell^{-\infty}),\\
\int |u_\ell|^4&=\ell^2 \int | Q_0|^4+ O(\ell^{-\infty})=\ell^2 \frac{2}{a^*} \int |\nabla Q_0|^2 + O(\ell^{-\infty})
\end{align*}
and
$$
\iint \frac{|u_\ell(x)|^2 |u_\ell(y)|^2}{|x-y|} \d x \d y = \ell \iint \frac{|Q_0(x)|^2 |Q_0(y)|^2}{|x-y|} \d x \d y + O(\ell^{-\infty}),
$$
where $O(\ell^{-\infty})$ means that this quantity converges to $0$ faster than $\ell^{-k}$ when $\ell\to \infty$ for all $k=1,2,...$ Moreover, since $x\mapsto V(x) |\varphi(x-x_0)|^2$ is integrable and $\ell^2 |Q_0(\ell(x-x_0))|^2$ converges weakly to Dirac-delta function at $x_0$ when $\ell\to \infty$, we have
\begin{align*}
\int_{\R^2} V |u|^2  = {|A_\ell|^2} \int_{\R^2} V(x) |\varphi(x-x_0)|^2  |Q_0(\ell(x-x_0))|^2 \ell^2 \d x \to V(x_0) 
\end{align*}
for a.e. $x_0\in \R^2$. Using \eqref{eq:Q0} we thus obtain  
\begin{align*}
E(a)\le \cE_a(u)= \ell^2 \Big( 1- \frac{a}{a^*} \Big) + V(x_0) - \ell \iint \frac{|Q_0(x)|^2 |Q_0(y)|^2}{|x-y|} \d x \d y + o(1)_{\ell\to \infty}.
\end{align*}
Choosing $\ell=\lambda (a^*-a)^{-1}$ with a constant $\lambda>0$ and take $a\uparrow a^*$, we obtain
\bq
\limsup_{a\uparrow a^*}E(a) (a^*-a)\le  \frac{\lambda^2}{a^*}- \lambda \iint \frac{|Q_0(x)|^2 |Q_0(y)|^2}{|x-y|} \d x \d y.
\eq
Choosing the optimal value
\bq
\lambda=\frac{a^*}{2} \iint \frac{|Q_0(x)|^2 |Q_0(y)|^2}{|x-y|} \d x \d y
\eq
leads to the desired result. 
\end{proof}

Next, we have the a simple lower bound for $\cE_a(u)$.

\begin{lemma} \label{lem:cEa-lbs}For all $u\in H^1(\R^2)$ with $\|u\|_{L^2}=1$, we have 
\bq \label{eq:cEu>-}
\cE_a(u) \ge \left(1-\frac{a}{a^*}-\eps \right) \int_{\R^2} |\nabla u|^2 + \int_{\R^2} V |u|^2 - \frac{C}{\eps}, \quad \forall \eps>0.
\eq
\end{lemma}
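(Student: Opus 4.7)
The plan is to control the two negative terms in the Gross--Pitaevskii energy separately in terms of the kinetic energy $\int |\nabla u|^2$ and an additive constant depending on $\eps$. The quartic attractive term is handled directly by the sharp Gagliardo--Nirenberg inequality \eqref{eq:GN}, which gives $\frac{a}{2}\int |u|^4 \le \frac{a}{a^*}\int |\nabla u|^2$. The self-gravitating term is the real focus and will be estimated by combining a Hardy--Littlewood--Sobolev estimate with a subcritical Gagliardo--Nirenberg interpolation, followed by Young's inequality.

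For the self-gravitating term, I would first apply the Hardy--Littlewood--Sobolev inequality in $\R^2$ with Riesz potential of order $1$ and conjugate exponents $p=q=4/3$ (noting that $\tfrac{1}{p}+\tfrac{1}{q}+\tfrac{1}{2}=2$), which yields
$$
\iint \frac{|u(x)|^2 |u(y)|^2}{|x-y|} \d x \d y \;\le\; C \bigl\| |u|^2 \bigr\|_{L^{4/3}}^2 \;=\; C \|u\|_{L^{8/3}}^4.
$$
Next, the standard subcritical Gagliardo--Nirenberg interpolation on $\R^2$, namely $\|u\|_{L^{8/3}}^{8/3} \le C \|\nabla u\|_{L^2}^{2/3} \|u\|_{L^2}^{2}$ (easily verified by scaling), together with the mass constraint $\|u\|_{L^2}=1$, produces
$$
\|u\|_{L^{8/3}}^4 \;\le\; C \|\nabla u\|_{L^2}.
$$

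Finally, Young's inequality $Ct \le \eps t^2 + \frac{C^2}{4\eps}$ applied with $t=\|\nabla u\|_{L^2}$ absorbs the resulting first-order term into the kinetic energy, so that
$$
\iint \frac{|u(x)|^2 |u(y)|^2}{|x-y|} \d x \d y \;\le\; \eps \int_{\R^2} |\nabla u|^2 + \frac{C}{\eps}.
$$
Substituting both bounds into the definition of $\cE_a(u)$ immediately yields the stated inequality. I do not anticipate any substantial obstacle; the only point worth flagging is that in $\R^2$ the singularity $|x-y|^{-1}$ is subcritical (Riesz exponent $1 < n=2$), which is precisely what makes HLS applicable with $L^{4/3}$ factors and what allows the subsequent interpolation to produce $\|\nabla u\|_{L^2}$ only to the first power, so that Young's inequality indeed gives a bound linear in $\int |\nabla u|^2$ up to a constant depending on $\eps$.
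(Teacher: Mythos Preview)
Your proposal is correct and follows essentially the same route as the paper: Gagliardo--Nirenberg for the quartic term, then Hardy--Littlewood--Sobolev to reduce the self-gravitating term to $C\|u\|_{L^{8/3}}^4$, followed by an interpolation to reach $C\|\nabla u\|_{L^2}$ and Young's inequality. The only cosmetic difference is that the paper passes from $\|u\|_{L^{8/3}}^4$ to $\|u\|_{L^4}^2\|u\|_{L^2}^2$ via H\"older and then invokes the sharp inequality \eqref{eq:GN}, whereas you apply the subcritical Gagliardo--Nirenberg interpolation for $L^{8/3}$ directly; both yield the same linear bound in $\|\nabla u\|_{L^2}$.
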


\begin{proof} Take arbitrarily $u\in H^1(\R^2)$ with $\|u\|_{L^2}=1$. By the Gagliardo--Nirenberg inequality \eqref{eq:GN}, we have
$$
\frac{a}{2}\int |u|^4 \le \frac{a}{a^*} \int |\nabla u|^2. 
$$
Moreover, by the Hardy-Littewood-Sobolev inequality \cite[Theorem 4.3]{LiLo-01},  H\"older's inequality and the Gagliardo--Nirenberg inequality \eqref{eq:GN} again we have
\begin{align} \label{eq:VVVV}
\iint_{\R^2\times \R^2} \frac{|u(x)|^2|u(y)|^2}{|x-y|} \d x \d y &\le C\|u\|_{L^{8/3}}^4 \le C\|u\|_{L^4}^2 \|u\|_{L^2}^2 \le \eps \int |\nabla u|^2 + \frac{C}{\eps} 
\end{align}
for all $\eps>0$. Combining these estimates, we obtain the desired lower bound. 
\end{proof}

Now we go to the proof of Theorem \ref{thm:main1}. From Lemma \ref{lem:1} and the fact that $a\mapsto E(a)$ is non-increasing, we deduce that $E(a)=-\infty$ for all $a\ge a^*$.

Next, consider $a<a^*$. We distinguish three cases when the external potential $V$ satisfies (V1), (V2) or (V3), respectively.    

\begin{lemma}[Trapping potentials] Assume that (V1) holds. Then $E(a)$ has a minimizer for all $a\in (0,a^*)$.
\end{lemma}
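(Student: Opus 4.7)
The plan is to run the direct method of the calculus of variations, with the trapping hypothesis supplying the compactness that is otherwise missing (and which will be delicate in cases (V2), (V3)). Let $u_n\ge 0$ be a minimizing sequence for $E(a)$, with $\|u_n\|_{L^2}=1$ and $\cE_a(u_n)\to E(a)$. Since $a<a^*$, I pick $\eps>0$ so small that $1-a/a^*-\eps>0$; then Lemma~\ref{lem:cEa-lbs} combined with $V\ge 0$ gives
\bqq
\Big(1-\frac{a}{a^*}-\eps\Big)\int_{\R^2}|\nabla u_n|^2 + \int_{\R^2} V|u_n|^2 \le \cE_a(u_n)+\frac{C}{\eps}\le C.
\eqq
Hence $\{u_n\}$ is bounded in $H^1(\R^2)$ and $\int V|u_n|^2$ is bounded.

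The second step is tightness. Using (V1), for every $\eta>0$ I can choose $R_\eta$ so that $V(x)\ge 1/\eta$ whenever $|x|\ge R_\eta$. The bound $\int V|u_n|^2\le C$ then yields $\int_{|x|\ge R_\eta}|u_n|^2\le C\eta$, uniformly in $n$. After extracting a subsequence, $u_n\wto u$ weakly in $H^1(\R^2)$, and by Rellich--Kondrachov applied on balls together with this tightness I upgrade to strong convergence $u_n\to u$ in $L^2(\R^2)$, and in particular $\|u\|_{L^2}=1$. The $H^1$-bound together with the 2D Gagliardo--Nirenberg inequality gives a uniform bound of $u_n$ in $L^p(\R^2)$ for every $p\in[2,\infty)$, so interpolating with the strong $L^2$ convergence promotes this to strong convergence in $L^p$ for any finite $p\ge 2$; in particular $\int|u_n|^4\to \int|u|^4$, and using the Hardy--Littlewood--Sobolev bound from \eqref{eq:VVVV} with strong $L^{8/3}$ convergence I also obtain convergence of the gravitational term.

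The final step is lower semicontinuity. Weak $H^1$ convergence gives $\int|\nabla u|^2\le\liminf\int|\nabla u_n|^2$. Since $V\ge 0$ and $u_n\to u$ a.e.\ along a further subsequence, Fatou yields $\int V|u|^2\le\liminf\int V|u_n|^2$ (which also shows $\int V|u|^2<\infty$, so $\cE_a(u)$ is well defined). Combining these with the strong convergence of the quartic and gravitational terms,
\bqq
\cE_a(u)\le \liminf_{n\to\infty}\cE_a(u_n)=E(a),
\eqq
and since $\|u\|_{L^2}=1$ this forces equality, so $u$ is a minimizer.

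The only genuinely non-routine point is the tightness, and that is precisely what the trapping hypothesis (V1) is built to supply; once that is in hand, the rest is standard weak/strong limiting of each term in $\cE_a$. The potentially delicate competition is the balance in $\cE_a(u_n)$ between the negative quartic and gravitational terms and the positive kinetic plus potential terms, but Lemma~\ref{lem:cEa-lbs} already absorbs both negative contributions into a fraction of the kinetic energy for any $a<a^*$, so no refined argument is needed here.
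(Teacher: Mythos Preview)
Your proof is correct and follows essentially the same approach as the paper: use Lemma~\ref{lem:cEa-lbs} with $V\ge 0$ to bound $\|u_n\|_{H^1}$ and $\int V|u_n|^2$, exploit $V(x)\to\infty$ for tightness (the paper phrases this as $\int_{|x|\le R}|u_0|^2 \ge 1 - C(\inf_{|z|>R}V)^{-1}$, which is the same estimate), upgrade to strong $L^p$ convergence for $p\in[2,\infty)$, and conclude via weak lower semicontinuity of the kinetic term and Fatou for the potential term.
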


\begin{proof} From \eqref{eq:cEu>-} and the assumption $V\ge 0$, we have $E(a)>-\infty$. Moreover, if $\{u_n\}$ is a minimizing sequence for $E(a)$, then $\|u_n\|_{H^1}$ and $\int V|u_n|^2$ are bounded. By Sobolev's embedding, after passing to a subsequence if necessary, we can assume that $u_n$ converges to a function $u_0$ weakly in $H^1(\R^2)$ and pointwise.  

For every $R>0$, $u_n\to u_0$ strongly in $L^2(B(0,R))$ by Sobolev's embedding. Therefore, 

\begin{align*}
\int_{|x|\le R} |u_0|^2 &= \lim_{n\to \infty}\int_{|x|\le R} |u_n|^2 = 1 - \lim_{n\to \infty}\int_{|x|> R} |u_n|^2 \\
&\ge 1 - \Big(\inf_{|z|>R} V(z)\Big)^{-1}\limsup_{n\to \infty}\int_{|x|> R} V |u_n|^2 \ge 1- C \Big(\inf_{|z|>R} V(z)\Big)^{-1}.
\end{align*} 
Taking $R\to \infty$ and $V(x)\to \infty$ as $|x|\to \infty$, we obtain
$$
\int_{\R^2} |u_0|^2\ge \lim_{R\to \infty}\int_{|x|\le R} |u_0|^2 \ge 1.
$$
Since we have known $u_n\to u_0$ weakly in $H^1(\R^2)$, we can conclude that $u_n\to u_0$ strongly in $L^2(\R^2)$. By Sobolev's embedding again, $u_n\to u_0$ strongly in $L^p(\R^2)$ for all $p\in [2,\infty)$. Consequently, $\|u_0\|_{L^2}=1$, 
$$
\int |u_n|^4 \to \int |u_0|^4
$$
and, by the Hardy-Littewood-Sobolev inequality,
$$
\iint_{\R^2\times \R^2} \frac{|u_n(x)|^2|u_n(y)|^2}{|x-y|} \d x \d y \to \iint_{\R^2\times \R^2} \frac{|u_0(x)|^2|u_0(y)|^2}{|x-y|} \d x \d y.
$$
Moreover, by Fatou's lemma we have
$$
\int |\nabla u_n|^2 \ge \int |\nabla u_0|^2 + o(1)
$$
as $u_n\to u$ weakly in $H^1(\R^2)$ and
$$
\int V|u_n|^2 \ge \int V |u_0|^2 + o(1)
$$
as $u_n\to u_0$ pointwise. In summary, 
$$
\cE_a(u_n)\ge \cE_a(u_0)+o(1).
$$
Since $u_n$ is a minimizing sequence, we conclude that $E(a)\ge \cE_a(u_0)$, i.e. $u_0$ is a minimizer.
\end{proof}

\begin{lemma}[Periodic potentials] Assume that (V2) holds. Then there exists $a_*<a^*$ such that $E(a)$ has a minimizer for all $a\in (a_*,a^*)$. 
\end{lemma}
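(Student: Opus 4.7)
The plan is to run Lions' concentration-compactness method on a minimizing sequence, exploiting the $\mathbb{Z}^2$-periodicity of $V$ both to recenter translates and to produce a strict binding inequality; the constant $a_*$ will emerge from the step that rules out vanishing. First I would observe that $V\in C(\R^2)$, being $\mathbb{Z}^2$-periodic, is bounded, so Lemma \ref{lem:cEa-lbs} applied with $\eps=(a^*-a)/(2a^*)$ gives
$$
\cE_a(u)\ge \frac{a^*-a}{2a^*}\int_{\R^2}|\nabla u|^2-C(a),\qquad \|u\|_{L^2}=1.
$$
Hence $E(a)>-\infty$ and every minimizing sequence $\{u_n\}$ is bounded in $H^1(\R^2)$; the Concentration-Compactness Lemma recalled in Appendix \ref{sec:app} then applies to $\rho_n:=|u_n|^2$, leaving three possibilities: vanishing, dichotomy, or compactness up to translation.

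To rule out vanishing, suppose $\sup_{y\in\R^2}\int_{B(y,R)}|u_n|^2\to 0$ for every $R>0$. The standard $L^p$-vanishing lemma then yields $\|u_n\|_{L^p}\to 0$ for every $p\in(2,\infty)$; in particular $\int|u_n|^4\to 0$ and, by the Hardy--Littlewood--Sobolev chain already used in \eqref{eq:VVVV}, the self-gravitating term also vanishes. Thus $\liminf_n\cE_a(u_n)\ge \inf V>-\infty$. But by Lemma \ref{lem:1}, $E(a)\to -\infty$ as $a\uparrow a^*$, so there exists $a_*<a^*$ such that $E(a)<\inf V$ for all $a\in(a_*,a^*)$. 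This precisely excludes vanishing on the desired range and fixes $a_*$.

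To rule out dichotomy, set $E(a,m):=\inf\{\cE_a(u):\|u\|_{L^2}^2=m\}$; I would establish the strict binding inequality $E(a)<E(a,\lambda)+E(a,1-\lambda)$ for every $\lambda\in(0,1)$, which contradicts the standard bound $E(a)\ge E(a,\lambda)+E(a,1-\lambda)$ produced by the dichotomy alternative. To this end, take compactly supported approximate minimizers $u_1,u_2$ for the two right-hand sides and translate the second by a lattice vector $z\in\mathbb{Z}^2$ of large length. Periodicity of $V$ (so $\int V|u_2(\cdot-z)|^2=\int V|u_2|^2$) together with disjointness of supports yields
$$
\cE_a\bigl(u_1+u_2(\cdot-z)\bigr)=\cE_a(u_1)+\cE_a(u_2)-2\iint\frac{|u_1(x)|^2\,|u_2(y)|^2}{|x-y-z|}\,\d x\,\d y,
$$
and the cross gravitational term behaves like $-2\lambda(1-\lambda)/|z|+o(1/|z|)$ as $|z|\to\infty$; choosing the approximation error much smaller than this attractive gain delivers the strict inequality. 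This is the step I expect to be the main obstacle: both the \emph{attractive} sign of the $|x-y|^{-1}$ interaction and the periodicity of $V$ (needed to translate $u_2$ without cost) are essential.

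The remaining compactness alternative then yields $y_n\in\R^2$ such that, along a subsequence, $\int_{|x-y_n|\le R_\varepsilon}|u_n|^2\ge 1-\varepsilon$ for every $\varepsilon>0$ and some $R_\varepsilon$. Decomposing $y_n=k_n+z_n$ with $k_n\in\mathbb{Z}^2$ and $z_n$ in a fundamental cell, the translates $v_n(x):=u_n(x+k_n)$ are still a minimizing sequence for $E(a)$ (by periodicity of $V$) and concentrate around the bounded sequence $\{z_n\}$. Extracting further, $v_n\wto v_0$ weakly in $H^1(\R^2)$; tightness together with Rellich--Kondrachov upgrades this to strong convergence in $L^p(\R^2)$ for every $p\in[2,\infty)$, so $\|v_0\|_{L^2}=1$, the quartic and (via the Hardy--Littlewood--Sobolev inequality) the Coulomb-type terms pass to the limit, the potential term converges because $V\in L^\infty$ and $|v_n|^2\to|v_0|^2$ in $L^1(\R^2)$, and Fatou handles the kinetic term. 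Combined, $\cE_a(v_0)\le \liminf_n\cE_a(v_n)=E(a)$, so $v_0$ is the desired minimizer.
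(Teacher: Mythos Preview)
Your argument is correct, and the overall architecture---coercivity from Lemma \ref{lem:cEa-lbs}, concentration-compactness, recentering by lattice vectors in the compactness case---matches the paper's. The genuine difference is in how dichotomy is excluded. You prove the strict binding inequality $E(a)<E(a,\lambda)+E(a,1-\lambda)$ directly, by gluing approximate minimizers at a large lattice separation and harvesting the attractive Coulomb cross-term $-2\lambda(1-\lambda)/|z|$; periodicity is used here to translate one piece without potential cost. The paper instead rescales each dichotomy piece $u_n^{(j)}/\|u_n^{(j)}\|_{L^2}$ as a trial state for $E(a)$ and exploits the super-quadratic homogeneity of the interaction terms to obtain $\cE_a(u_n^{(j)})\ge \|u_n^{(j)}\|_{L^2}^4 E(a)+(1-\|u_n^{(j)}\|_{L^2}^2)\|u_n^{(j)}\|_{L^2}^2\inf V$; summing yields $E(a)\ge\inf V$, which is killed by the \emph{same} strict inequality $E(a)<\inf V$ already used for vanishing. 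So the paper recycles one inequality for both alternatives and never introduces $E(a,m)$, while your route is the classical Lions subadditivity argument and makes essential use of the long-range attraction; a small bonus of your approach is that the dichotomy exclusion holds for every $a\in(0,a^*)$, not only above the vanishing threshold $a_*$.
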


\begin{proof}
Since $V$ is continuous and periodic, it is uniformly bounded. From \eqref{eq:cEu>-}, we have $E(a)>-\infty$ for all $a\in (0,a^*)$. Moreover, by Lemma \ref{lem:1} we have $E(a)\to -\infty$ as $a\uparrow a^*$. Therefore, we can find $a_*<a^*$ such that for all $a\in (a_*,a^*)$ we have
\bq \label{eq:E<V}
E(a)<\inf V.
\eq

Now we prove $E(a)$ has minimizers for all $a\in (a_*,a^*)$. We will use the concentration-compactness method of Lions \cite{Lions-84,Lions-84b}. For the reader's convenience, we summary all we need in Lemma \ref{lem:concom} in  Appendix. 

Let $\{u_n\}$ be a minimizing sequence for $E(a)$.  From \eqref{eq:cEu>-}, $u_n$ is bounded in $H^1$. Hence, we can apply Concentration-Compactness Lemma \ref{lem:concom} to the sequence $\{u_n\}$.  Up to subsequences, one of the three cases in  Lemma \ref{lem:concom} must occur.

\medskip

{\bf No-vanishing.} Assume that the vanishing  case (ii) in Lemma \ref{lem:concom} occurs. Then  we have
$$
\int |u_n|^4 \to 0, \quad \iint_{\R^2\times \R^2} \frac{|u_n(x)|^2|u_n(y)|^2}{|x-y|} \d x \d y \to 0
$$
(in the latter we used the Hardy-Littewood-Sobolev inequality). Combining with the obvious lower bound
$$
\int V|u_n|^2 \ge \inf V
$$
we find that
\bq \label{eq:E>=V}
E(a)=\lim_{n\to \infty}\cE_a(u_n) \ge \inf V.  
\eq
However, this contradicts to \eqref{eq:E<V}. Thus the vanishing case cannot occur.

\medskip

{\bf No-dichotomy.} Assume that the dichotomy case (ii) in Lemma \ref{lem:concom} occurs, i.e. we can find two sequences $u_n^{(1)}$, $u_n^{(2)}$ such that 
\[ 
\left \{
\begin{aligned}
  & \int_{\R^2} |u_n^{(1)}|^2 \to  \lambda,  \quad \int_{\R^2} |u_n^{(2)}|^2 \to  1-\lambda,  \\
  & {\rm dist} ({\rm supp} (u_n^{(1)}), {\rm supp} (u_n^{(2)})) \to +\infty;\\
  & \|u_n - u_n^{(1)}-u_n^{(2)}\|_{L^p}\to  0  , \quad  \forall p\in [2,\infty); \\
  & \int_{\R^2} (|\nabla u_n|^2 - |\nabla u_n^{(1)}|^2 - |\nabla u_n^{(2)}|^2) \ge o(1).
  \end{aligned}
 \right.
\]
Since $\|u_n - u_n^{(1)}-u_n^{(2)}\|_{L^p}\to  0$ and ${\rm dist} ({\rm supp} (u_n^{(1)}), {\rm supp} (u_n^{(2)})) \to +\infty$ we have
$$
\int_{\R^2} |u_n|^4 =\int_{\R^2}  |u_n^{(1)}+u_n^{(2)}|^4 + o(1) = \int_{\R^2} \Big( |u_n^{(1)}|^4 + |u_n^{(2)}|^4) +o(1)
$$
and
\begin{align}
  &\iint_{\R^2\times \R^2} \frac{|u_n(x)|^2|u_n(y)|^2}{|x-y|} \d x \d y \\ \label{eq:HLS-cou}
 &= \iint_{\R^2\times \R^2} \frac{|u_n^{(1)}(x)+u_n^{(2)}(x)|^2|u_n^{(1)}(y)+u_n^{(2)}(y)|^2}{|x-y|} \d x \d y +o(1) \\
 &= \iint_{\R^2\times \R^2} \frac{(|u_n^{(1)}(x)|^2+|u_n^{(2)}(x)|^2) (|u_n^{(1)}(y)|^2+|u_n^{(2)}(y)|^2)}{|x-y|} \d x \d y + o(1)\nn\\\label{eq:HLS-coub}
 &= \iint_{\R^2\times \R^2} \frac{|u_n^{(1)}(x)|^2 |u_n^{(1)}(y)|^2+|u_n^{(2)}(x)|^2 |u_n^{(2)}(y)|^2}{|x-y|} \d x \d y + o(1)
\end{align}
Here we have used the Hardy-Littewood-Sobolev inequality in \eqref{eq:HLS-cou} and used the decay of Newton potential, i.e. $|x-y|^{-1}\to 0$ as $|x-y|\to \infty$, to remove the cross-term 
$$\iint \frac{ |u_n^{(1)}(x)|^2 |u_n^{(2)}(y)|^2+|u_n^{(2)}(x)|^2 |u_n^{(1)}(y)|^2}{|x-y|} \d x \d y$$
in \eqref{eq:HLS-coub}. 

 Moreover, since $V$ is bounded we get
\bq
\int_{\R^2} V\left(|u_n|^2-|u_1^{(1)}|^2-|u_n^{(2)}|^2\right)= \int_{\R^2} V\left(|u_n|^2-|u_1^{(1)}+u_n^{(2)}|^2\right)\to 0
\eq
In summary, we have the energy decomposition
\bq \label{eq:Eun-un12}
\cE_a(u_n) \ge \cE_a(u_n^{(1)}) + \cE_a(u_n^{(2)})+o(1).
\eq

Next, we use $u_n^{(1)}/\|u_n^{(1)}\|_{L^2}$ as a trial state for $E(a)$. By the variational principle, we have
\begin{align} \label{eq:decom-Eaaa}
E(a) \le \cE\Big(\frac{u_n^{(1)}}{\|u_n^{(1)}\|_{L^2}}\Big) &= \lambda^{-1} \Big( \int |\nabla u_n^{(1)}|^2 + \int V|u_n^{(1)}|^2 \Big)\nn\\
& - \lambda^{-2} \Big( \frac{a}{2}\int |u_n^{(1)}|^4 + \iint \frac{|u_n^{(1)}(x)|^2 |u_n^{(1)}(y)|^2}{|x-y|} \d x \d y\Big)+o(1).
\end{align}
In the latter equality, we have used $\|u_n^{(1)}\|_{L^2}^2\to \lambda$ (and $u_n^{(1)}$ is bounded in $H^1$). The above inequality can be rewritten as 
\bq \label{eq:cEun1}
\lambda^2 E(a) + (1-\lambda) \Big( \int |\nabla u_n^{(1)}|^2 + \int V|u_n^{(1)}|^2 \Big) \le \cE_a(u_n^{(1)})+o(1).
\eq
By ignoring the kinetic energy on the left side and the obvious bound
$$
\int V|u_n^{(1)}|^2 \ge (\inf V) \int  |u_n^{(1)}|^2 = \lambda (\inf V)+o(1)
$$
we find that
$$
\lambda^2 E(a) + \lambda (1-\lambda) \inf V  \le \cE_a(u_n^{(1)}) +o(1).
$$
Similarly, since $\|u_n^{(1)}\|_{L^2}^2\to 1-\lambda$ we get
$$
(1-\lambda)^2 E(a) +\lambda (1-\lambda) \inf V  \le \cE_a(u_n^{(2)})+o(1).
$$
Summing the latter inequalities gives 
$$
\cE_a(u_n^{(1)}) + \cE_a(u_n^{(2)}) \ge (\lambda^2+(1-\lambda)^2) E(a)+ 2 \lambda (1-\lambda) \inf V  + o(1).
$$
Inserting this into \eqref{eq:Eun-un12} and using $\cE_a(u_n)\to E(a)$ we arrive at
$$
E(a) \ge (\lambda^2+(1-\lambda)^2) E(a)+ 2 \lambda (1-\lambda) \inf V 
$$
which is equivalent to
$$
E(a)\ge \inf V
$$
because $\lambda \in (0,1)$. But again, it is a contradiction to \eqref{eq:E<V}. Thus the dichotomy case cannot occur.

\medskip

{\bf Compactness.} Now we can conclude that the compactness case (i) in Lemma \ref{lem:concom} must occur, i.e. there exists a sequence $\{z_n\}\subset \R^2$ such that $\widetilde u_n= u_n(.+z_n)$ converges to some $u_0$ weakly in $H^1(\R^2)$ and strongly in $L^p(\R^2)$ for all $p\in [2,\infty)$. Since the Lebesgue measure is translation-invariant, we have $\|u_0\|_{L^2}=1$,
$$
\int_{\R^2} |u_n|^4  = \int_{\R^2} |u_n(x+z_n)|^4 \d x =  \int_{\R^2} |u_0|^4 + o(1)_{n\to \infty},
$$

$$
\int_{\R^2} V |u_n|^2 = \int_{\R^2} V(x+z_n) |u_n(x+z_n)|^2 \d x = \int_{\R^2} V(x+z_n) |u_0(x)|^2 \d x + o(1)_{n\to \infty}.
$$
by the boundedness of $V$, and 

\begin{align*}
\iint_{\R^2} \frac{|u_n(x)|^2 |u_n(y)|^2}{|x-y|}\d x \d y &=\iint_{\R^2} \frac{|u_n(x+z_n)|^2 |u_n(y+z_n)|^2}{|x-y|}\d x \d y\\
&=\iint_{\R^2} \frac{|u_0(x)|^2 |u_0(y)|^2}{|x-y|}\d x \d y + o(1)_{n\to \infty}
\end{align*}
by the Hardy-Littewood-Sobolev inequality.

Moreover,  since $\nabla u_n\to \nabla u$ weakly in $L^2(\R^2)$ we have
\begin{align*}
\int_{\R^2} |\nabla u_n|^2  = \int_{\R^2} |\nabla u_n(x+z_n)|^2 \d x \ge  \int_{\R^2} |\nabla u_0|^2 + o(1)_{n\to \infty} .
\end{align*} 
In summary, since $u_n$ is a minimizing sequence, we conclude that
\begin{align} \label{eq:cEa>=cEau0}
\cE_a(u_n) &\ge  \int_{\R^2} |\nabla u_0|^2 + \int  V(\cdot + z_n)|u_0|^2 \nn\\
&- \frac{a}{2} \int |u_0(x)|^4 - \iint_{\R^2} \frac{|u_0(x)|^2 |u_0(y)|^2}{|x-y|}\d x \d y   + o(1)_{n\to \infty}.
\end{align}

To finish, we use the periodicity of $V$. We can write  
$$
z_n = y_n + z\quad \text{with}\quad y_n\in [0,1]^2, z\in \mathbb{Z}^2.
$$
Since $y_n$ is bounded, up to a subsequence, we can assume that $y_n\to y_0$ in $\R^2$. Thus by the periodicity of $V$ and the Lebesgue Dominated Convergence, we have
$$
\int  V(x+ z_n)|u_0(x)|^2 \d x =  \int  V(x+ y_n)|u_0(x)|^2 \d x \to  \int  V(x+ y_0)|u_0(x)|^2 \d x.
$$
Thus \eqref{eq:cEa>=cEau0} reduces to
\begin{align*}
\cE_a(u_n) &\ge  \int_{\R^2} |\nabla u_0|^2 + \int  V(\cdot + y_0)|u_0|^2 \nn\\
&- \frac{a}{2} \int |u_0(x)|^4 - \iint_{\R^2} \frac{|u_0(x)|^2 |u_0(y)|^2}{|x-y|}\d x \d y   + o(1)_{n\to \infty}\\
&= \cE_a(u_0(\cdot -y_0))+o(1)_{n\to \infty}.
\end{align*}
Since $u_n$ is a minimizing sequence, we conclude that $u_0(\cdot -y_0)$ is a minimizer for $E(a)$. 
\end{proof}

\begin{lemma}[Attractive potentials] \label{lem:V3}Assume that (V3) holds. Then $E(a)$ has a minimizer for all $a\in (0,a^*)$. 
\end{lemma}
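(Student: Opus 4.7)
The plan is to adapt the concentration-compactness strategy used in the periodic case, handling the loss of translation invariance by a separate translation argument at the end. First I would establish $-\infty<E(a)<0$ for all $a\in(0,a^*)$: the lower bound follows from Lemma \ref{lem:cEa-lbs} combined with a H\"older--Gagliardo--Nirenberg estimate $|\int V|u|^2|\le\eps\|\nabla u\|_{L^2}^2+C_\eps$ valid for $V\in L^p+L^q$, $p,q>1$ and $\|u\|_{L^2}=1$; the upper bound comes from the trial function $u_\lambda(x)=\lambda Q_0(\lambda x)$ with small $\lambda>0$, giving, using $V\le 0$ together with \eqref{eq:Q0},
\[
\cE_a(u_\lambda)\le \lambda^2\Big(1-\frac{a}{a^*}\Big)-\lambda\iint\frac{|Q_0(x)|^2|Q_0(y)|^2}{|x-y|}\,\d x\,\d y<0
\]
once $\lambda>0$ is small enough, since the gravity contribution is linear in $\lambda$ and dominates the quadratic kinetic-plus-$L^4$ term.

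Let $\{u_n\}$ be a minimizing sequence; by the lower bound it is bounded in $H^1$. I would then apply the Concentration-Compactness Lemma \ref{lem:concom}. The vanishing alternative is ruled out because it would force $\int|u_n|^4\to 0$, the gravity term to vanish (Hardy--Littlewood--Sobolev), and $\int V|u_n|^2\to 0$ (H\"older plus strong $L^r$ convergence for $r>2$, since $V\in L^p+L^q$, $p,q>1$), yielding $E(a)\ge 0$ in contradiction with the upper bound. For dichotomy, the energy splits as $\cE_a(u_n)\ge\cE_a(u_n^{(1)})+\cE_a(u_n^{(2)})+o(1)$ (the gravity cross-term vanishes because $|x-y|^{-1}\to 0$ across the separated supports, the $V$ cross-term by disjoint supports plus integrability of $V$ against $L^r$ norms); testing $u_n^{(i)}/\|u_n^{(i)}\|_{L^2}$ in the variational principle, exactly as in the periodic case, forces both $\int|u_n^{(i)}|^4$ and $\iint|u_n^{(i)}(x)|^2|u_n^{(i)}(y)|^2/|x-y|\,\d x\,\d y$ to tend to zero. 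The vanishing-type argument applied to each piece then yields $\cE_a(u_n^{(i)})\ge o(1)$, and summing contradicts $E(a)<0$.

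The compactness alternative produces $z_n\in\R^2$ such that $v_n:=u_n(\cdot+z_n)\to u_0$ weakly in $H^1$ and strongly in every $L^r$, $r\in[2,\infty)$, with $\|u_0\|_{L^2}=1$. If $\{z_n\}$ admits a bounded subsequence $z_n\to z_\infty$, then $V(\cdot+z_n)\to V(\cdot+z_\infty)$ strongly in $L^p+L^q$, so $\int V|u_n|^2=\int V(\cdot+z_n)|v_n|^2 \to \int V(\cdot+z_\infty)|u_0|^2$, and Fatou on the kinetic term together with strong convergence on the nonlinear terms gives $E(a)\ge \cE_a(u_0(\cdot-z_\infty))$, so $u_0(\cdot-z_\infty)$ is a minimizer.

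The main obstacle is ruling out $|z_n|\to\infty$ when $V\not\equiv 0$. In that case $V(\cdot+z_n)\rightharpoonup 0$ weakly in $L^p+L^q$, hence $\int V|u_n|^2\to 0$; combined with $E(a)\le E^{V=0}(a)$ (from $V\le 0$, where $E^{V=0}$ denotes the infimum obtained by dropping the $V$-term) one deduces $E(a)=E^{V=0}(a)$ with $u_0$ a minimizer of the $V=0$ functional. The contradiction I would exploit is that, since $\{V<0\}$ has positive Lebesgue measure and $u_0\not\equiv 0$, some translate $u_0(\cdot-z)$ must be nonzero on a positive-measure subset of $\{V<0\}$ (else $u_0\equiv 0$), giving $\int V(x)|u_0(x-z)|^2\,\d x<0$ and hence $\cE_a(u_0(\cdot-z))<\cE_a^{V=0}(u_0)=E(a)$, violating the infimum property. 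The delicate point is confirming this strict inequality via a carefully chosen translate; when $V\equiv 0$, the problem is already translation-invariant and $u_0$ itself is a minimizer.
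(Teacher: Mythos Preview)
Your proof is correct, but it differs from the paper's in two substantive ways. First, the paper does not run the trichotomy of Lemma~\ref{lem:concom} here: it passes directly to the weak $H^1$-limit $u_0$ of the (untranslated) minimizing sequence and establishes the Brezis--Lieb splitting $\cE_a(u_n)\ge \cE_a(u_0)+\cE_a^\infty(u_n-u_0)+o(1)$, where $\cE_a^\infty$ drops $V$; both obstructions ($u_0\equiv 0$ and $0<\|u_0\|_{L^2}^2<1$) are then dispatched by the single strict inequality $E(a)<E^\infty(a)$ for $V\not\equiv 0$. Second, the paper derives that strict inequality via PDE: the already-obtained minimizer $\varphi_0\ge 0$ of $E^\infty(a)$ solves an Euler--Lagrange equation, hence $(-\Delta+|\mu|)\varphi_0\ge 0$, hence $\varphi_0>0$ everywhere by the strong maximum principle, and $\int V|\varphi_0|^2<0$ follows. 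Your route is more elementary, choosing a translate of $u_0$ that charges $\{V<0\}$; the point you flag as ``delicate'' is in fact routine, since for $A=\{V<0\}$ and $B=\{|u_0|>0\}$ both of positive measure, Fubini gives $\int_{\R^2}|A\cap(z+B)|\,\d z=|A|\,|B|>0$, so some $z$ yields $\int V|u_0(\cdot-z)|^2<0$. Your dichotomy step is also handled differently (and correctly): the rescaled variational inequality forces the quartic and gravity terms of each piece to vanish, whence the $V$-term vanishes by interpolation, contradicting $E(a)<0$ --- this is \emph{not} ``exactly as in the periodic case'', where the conclusion was $E(a)\ge\inf V$. The paper's Brezis--Lieb route avoids the case analysis on the translations $z_n$; your route avoids the Euler--Lagrange and maximum-principle machinery.
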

\begin{proof} First, if $V\equiv 0$, then we can follow the proof of the periodic case and use the strict inequality $E(a)<0$ (i.e. \eqref{eq:E<V} holds true with $a_*=0$). To prove $E(a)<0$, we can simply take the trial function 
$$u_\ell( \ell x)=\ell Q_0(\ell x)$$
and use the variational principle
$$
E(a)\le \cE(u_\ell)= \ell^2 \Big( 1-\frac{a}{a^*}\Big) -\ell \iint \frac{|Q_0(x)|^2|Q_0(y)|^2}{|x-y|} \d x \d y
$$
with $\ell>0$ sufficiently small. 

Now we come to the case when $V\not \equiv 0$. Since $V\in L^p(\R^2)+L^q(\R^2)$ with $1<p<q<\infty$, by Sobolev's inequality, we have the lower bound (see \cite[Section 11.3]{LiLo-01})
\bq \label{eq:So}
\eps \int |\nabla u|^2  + \int  V|u|^2 \ge -C_\eps , \quad \forall \eps>0, \quad \forall u\in  H^1(\R^2), \|u\|_{L^2}=1.
\eq
Thus \eqref{eq:cEu>-} reduces to 
$$
\cE_a(u) \ge \left(1-\frac{a}{a^*}-\eps \right) \int_{\R^2} |\nabla u|^2 - C_\eps, \quad \forall \eps>0, \quad \forall u\in H^1(\R^2), \|u\|_{L^2}=1.
$$
This ensures that $E(a)>-\infty$. Moreover, if $\{u_n\}$ is a minimizing sequence for $E(a)$, then $u_n$ is bounded in $H^1(\R^2)$. By Sobolev's embedding, after passing to a subsequence if necessary, we can assume that $u_n$ converges to a function $u_0$ weakly in $H^1(\R^2)$ and pointwise.  

\medskip

{\bf Energy decomposition.} Now following the concentration-compactness argument, we will show that
\bq \label{eq:cEun-cEu0-cEu}
\cE_a(u_n)\ge \cE_a(u_0)+ \cE^\infty_a(u_n-u_0)+o(1)
\eq
with $\cE_a^\infty(\varphi)$ the energy functional without the external potential, i.e.
$$
\cE_a^\infty(\varphi)=\int |\nabla \varphi|^2 -\frac{a}{2}\int |\varphi|^4 -\iint \frac{|\varphi(x)|^2 |\varphi(y)|^2}{|x-y|} \d x \d y.
$$

Indeed, since $u_n\to u_0$ weakly in $H^1(\R^2)$, we have  $\nabla u_n\to \nabla u$ weakly in $L^2$ and hence 
\begin{align*}
\|\nabla u_n\|_{L^2}^2 &=  \|\nabla u_0\|_{L^2}^2 + \|\nabla (u_n-u_0)\|_{L^2}^2 + 2 \langle \nabla u_0, \nabla (u_n-u_0)\rangle \\
&=  \|\nabla u_0\|_{L^2}^2 + \|\nabla (u_n-u_0)\|_{L^2}^2 +o(1).
\end{align*}
Moreover, since $u_n\to u_0$ weakly in $H^1(\R^2)$ and $V\in L^p(\R^2)+L^q(\R^2)$ with $1<p<q<\infty$ we have (see \cite[Theorem 11.4]{LiLo-01})
$$
\int V|u_n|^2 \to \int V|u_0|^2.
$$
Moreover, since $u_n\to u_0$ pointwise, we have 
$$
\int |u_n|^4= \int |u_0|^4 + \int |u_n-u_0|^4 + o(1)
$$
by Brezis-Lieb's refinement of Fatou's lemma \cite{BreLie-83} 
\begin{align*}
&\iint \frac{|u_n(x)|^2 |u_n(y)|^2}{|x-y|} \d x \d y \\
&=\iint \frac{|u_0(x)|^2 |u_0(y)|^2}{|x-y|} \d x \d y + \iint \frac{|u_n(x)-u_0(x)|^2 |u_n(y)-u_0(y)|^2}{|x-y|} \d x \d y +o(1)
\end{align*}
by a nonlocal version of the Brezis-Lieb lemma in \cite[Lemma 2.2]{BFV-14}.  Thus \eqref{eq:cEun-cEu0-cEu} holds true. 

\medskip

{\bf No-vanishing.} Next, we show that $u_0\not\equiv 0$. Assume by contradiction that $u_0\equiv 0$. Then \eqref{eq:cEun-cEu0-cEu} implies that
\bq \label{eq:E=Einf}
E(a)= \lim_{n\to \infty}\cE_a(u_n) \ge \lim_{n\to \infty}\cE_a^\infty(u_n) \ge E^\infty(a)
\eq
where
$$
E^\infty(a)= \inf\{\cE^\infty_a(\varphi)\,|\,\varphi\in H^1(\R^2), \|\varphi\|_{L^2}=1\}.
$$

On the other hand, we have proved that $E^\infty(a)$ has a minimizer $\varphi_0$ (this is the case when the external potential vanishes). We can assume $\varphi_0\ge 0$ by the diamagnetic inequality. By a standard variational argument, $\varphi_0$ solves the Euler-Lagrange equation 
\bq \label{eq:EL0}
-\Delta \varphi_0 - a\varphi_0^3 - 2(|\varphi_0|^2*|x|^{-1})\varphi_0 =\mu \varphi_0
\eq
for a constant $\mu\in \R$ (the Lagrange multiplier). Consequently, we have 
$(-\Delta+|\mu|)\varphi_0 \ge 0$, and hence $\varphi_0>0$ by \cite[Theorem 9.9]{LiLo-01}. 

From  the facts that $V\le 0$, $V\not\equiv 0$ and $\varphi_0>0$, we have
$$
\cE_a(\varphi_0)-\cE^\infty_a(\varphi_0) = \int V|\varphi_0|^2<0.
$$
Therefore, by the variational principle,
\bq \label{eq:E<Einf}
E(a) < E^\infty(a). 
\eq
Thus \eqref{eq:E=Einf} cannot occur, i.e. we must have that $u_0\not\equiv 0$.

\medskip

{\bf Compactness.} It remains to show that $\|u_0\|_{L^2}=1$. We assume by contradiction that $\|u_0\|_{L^2}^2=\lambda\in (0,1)$. Then similarly to \eqref{eq:decom-Eaaa} we have 
\begin{align*}
E(a) \le \cE\Big(\frac{u_0}{\|u_0\|_{L^2}}\Big) &= \lambda^{-1} \Big( \int |\nabla u_0|^2 + \int V|u_0|^2 \Big)\nn\\
& - \lambda^{-2} \Big( \frac{a}{2}\int |u_0|^4 + \frac{1}{2} \iint \frac{|u_0(x)|^2 |u_0(y)|^2}{|x-y|} \d x \d y\Big)\nn\\
&\le \lambda^{-1} \cE_a(u_0).\nn
\end{align*}
Thus
\begin{align} \label{eq:Vfn1}
\cE_a(u_0)\ge \lambda E(a).
\end{align}
Similarly, using $\|u_n-u_0\|^2\to 1-\lambda$ we get
\begin{align} \label{eq:Vfn2}
\cE_a^\infty(u_n-u_0)\ge (1-\lambda) E^\infty(a) + o(1).
\end{align}
Inserting \eqref{eq:Vfn1} and \eqref{eq:Vfn2} into \eqref{eq:cEun-cEu0-cEu} and using $\cE_a(u_n)\to E(a)$, we get
$$
E(a)\ge  \lambda E(a) + (1-\lambda) E^\infty(a). 
$$
which is equivalent to $E(a)=E^\infty(a)$ as $\lambda\in (0,1)$. However, it contradicts to \eqref{eq:E<Einf}. 

Thus we conclude that $\|u_0\|_{L^2}=1$. Now \eqref{eq:Vfn2} becomes $\cE_a^\infty(u_n-u_0)\ge o(1)$, which, together with   \eqref{eq:cEun-cEu0-cEu}, implies that
$$
E(a)=\lim_{n\to \infty} \cE_a(u_n)\ge \cE_a(u_0).
$$
Thus $u_0$ is a minimizer for $E(a)$. This ends the proof of Lemma \ref{lem:V3}. 
\end{proof}
The proof of Theorem \ref{thm:main1} is complete.

\section{Blow-up: weakly singular potentials} \label{sec:main2}

In this section we prove Theorem \ref{thm:main2}. We will always assume that $V$ satisfies \eqref{eq:weakV}, i.e.
$$
V\in L^1_{\rm loc}(\R^2), \quad V(x) \ge -C \sum_{j\in J} \frac{1}{|x-z_j|^{p}}, \quad 0<p<1.
$$

To simplify the notation, let us denote by $u_a$ the approximate minimizer for $E(a)$, i.e. $\cE_a(u_a)/E(a)\to 1$ and write $a\uparrow a^*$ instead of $a_n \uparrow a^*$. Also, we denote  
$$\ell_a=(a^*-a)^{-1}, \quad \beta = \frac{a^*}{2} \iint \frac{|Q_0(x)|^2 |Q_0(y)|^2}{|x-y|} \d x \d y.$$

We will always consider the case when $a$ is sufficiently close to $a^*$. We start with

\begin{lemma}[A-priori estimate] \label{lem:lem9} We have
$$
\quad C\ell_a^2 \ge \int |\nabla u_a|^2 \ge \frac{\ell_a^2}{C},\quad \int |\nabla u_a|^2 - \frac{a}{2} \int |u_a|^4 \le C\ell_a
$$
\end{lemma}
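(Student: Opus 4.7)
All three bounds are consequences of two inputs: the sharp trial-function estimate from Lemma \ref{lem:1}, which gives $E(a)\le -M\ell_a$ and hence $\cE_a(u_a)\le -c\ell_a$ for $a$ close to $a^*$, combined with interpolation estimates controlling the two non-positive terms in $\cE_a$ by the kinetic energy, namely the potential $\int V|u|^2$ and the gravitational self-interaction $D(u):=\iint|u(x)|^2|u(y)|^2/|x-y|\,\d x\,\d y$. The role of the hypothesis $p<1$ in \eqref{eq:weakV} is precisely to make $V$ a \emph{small} perturbation of the kinetic term.

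\textbf{Key interpolation.} First I would prove that for any $z\in\R^2$, any $\eta>0$ and any $u\in H^1(\R^2)$ with $\|u\|_{L^2}=1$,
\begin{equation*}
 \int\frac{|u|^2}{|x-z|^p}\,\d x \le \eta\int|\nabla u|^2\,\d x + C\,\eta^{-\alpha},\qquad \alpha=\tfrac{p}{2-p}+o(1)<1.
\end{equation*}
The proof splits a ball around $z$ from its complement; on the complement $\|u\|_{L^2}=1$ suffices, and on the ball one applies H\"older with an exponent $q>1$ just below $2/p$ (possible since $p<1$ means $|x-z|^{-p}$ is locally in such an $L^q$), bounds the resulting $\|u\|_{L^{2q'}}^2$ by the 2D Gagliardo--Nirenberg inequality $\|u\|_{L^{2q'}}^2\le C\|\nabla u\|_{L^2}^{2-2/q'}$, and finishes with Young's inequality. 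Combining with \eqref{eq:weakV} gives $\int V|u|^2\ge -\eta\int|\nabla u|^2-C\eta^{-\alpha}$. For the Newtonian term I would quote the bound already established in \eqref{eq:VVVV}: $D(u)\le C\|u\|_{L^4}^2\le C(\int|\nabla u|^2)^{1/2}$ by Gagliardo--Nirenberg.

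\textbf{Deducing the three bounds.} For the upper bound on $\int|\nabla u_a|^2$, I would insert the interpolation into Lemma \ref{lem:cEa-lbs} to get
\begin{equation*}
 \cE_a(u)\ge \Big(1-\tfrac{a}{a^*}-\eps-\eta\Big)\int|\nabla u|^2 - C\eps^{-1} - C\eta^{-\alpha},
\end{equation*}
and choose $\eps=\eta=1/(4a^*\ell_a)$ so that the kinetic prefactor equals $1/(2a^*\ell_a)$ while the error is $\le C\ell_a$ (using $\alpha<1$). Since $\cE_a(u_a)\le 0$, rearranging yields $\int|\nabla u_a|^2\le C\ell_a^2$. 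For the lower bound, the Gagliardo--Nirenberg inequality shows $\int|\nabla u|^2-(a/2)\int|u|^4\ge 0$, so discarding it leaves
\begin{equation*}
 \cE_a(u_a)\ge \int V|u_a|^2 - D(u_a)\ge -C\Big(\int|\nabla u_a|^2\Big)^{p/2} - C - C\Big(\int|\nabla u_a|^2\Big)^{1/2};
\end{equation*}
combining with $\cE_a(u_a)\le -c\ell_a$ and using $p<1$ (so the first term is of lower order) forces $(\int|\nabla u_a|^2)^{1/2}\ge c\ell_a$. Finally, the third bound follows by writing $\int|\nabla u_a|^2-(a/2)\int|u_a|^4 = \cE_a(u_a)-\int V|u_a|^2 + D(u_a)$ and bounding each term on the right by $C\ell_a$: the first is $\le 0$, the second by $C(\int|\nabla u_a|^2)^{p/2}+C\le C\ell_a^p+C\le C\ell_a$ (using the kinetic upper bound and $p<1$), and the third by $C(\int|\nabla u_a|^2)^{1/2}\le C\ell_a$.

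\textbf{Main obstacle.} The delicate point is the upper bound step: the negative potential contribution has to be absorbed into the shrinking kinetic gap $1-a/a^*\sim \ell_a^{-1}$ while keeping the remainder of order $\le\ell_a$. The interpolation exponent $\alpha\approx p/(2-p)$ is strictly less than $1$ precisely when $p<1$, which makes the choice $\eta\sim\ell_a^{-1}$ produce an error of size $\ell_a^\alpha\le C\ell_a$. This is where assumption \eqref{eq:weakV} enters essentially, and explains why a genuinely different regime of analysis (and a different length scale $\ell_a\sim(a^*-a)^{-1/(2-p)}$) is required in Theorem \ref{thm:main3} for $p\ge 1$.
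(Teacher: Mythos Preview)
Your proof is correct and follows essentially the paper's approach: both insert the interpolation $\int|u|^2/|x-z|^p\le\eps\int|\nabla u|^2+C\eps^{-p/(2-p)}$ (stated in the paper as Lemma~\ref{lem:Hyd}, derived by the same scaling considerations behind your H\"older--Gagliardo--Nirenberg--Young argument) together with \eqref{eq:VVVV} into Lemma~\ref{lem:cEa-lbs}, choose $\eps\sim\ell_a^{-1}$, and compare with the energy upper bound from Lemma~\ref{lem:1}. The only cosmetic difference is in the kinetic lower bound, where you discard the nonnegative Gagliardo--Nirenberg deficit and force $(\int|\nabla u_a|^2)^{1/2}\gtrsim\ell_a$ from $\cE_a(u_a)\le -c\ell_a$ directly, while the paper instead reuses the same lower bound \eqref{eq:lwbbbbbb} with $\eps=\gamma(1-a/a^*)$ for $\gamma$ large; both routes are equivalent.
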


\begin{proof} From Lemma \ref{lem:1} and the assumption $\cE_a(u_a)/E(a)\to 1$ we have the sharp upper bound 
\bq \label{eq:weakV-Ea<=}
\cE_a(u_a)\le E(a)(1+o(1))\le -\frac{\beta^2}{a^*} \ell_a (1+o(1)). 
\eq

Now we go to the lower bound. We recall an elementary result, whose proof follows  by a simple scaling argument (see e.g. \cite[Proof of Lemma 6]{Phan-17a} for details).
\begin{lemma}\label{lem:Hyd} For every $0<q<2$, $y\in \R^2$ and $\eps>0$, we have
$$
\int \frac{|u(x)|^2}{|x-y|^q}\d x \le \eps\int |\nabla u|^2 + C_q \eps^{-q/(2-q)} \int |u|^2, \quad \forall u\in H^1(\R^2).
$$
\end{lemma}
Since $V$ satisfies \eqref{eq:weakV}, Lemma \ref{lem:Hyd} implies that
\bq \label{eq:Vu-lwb1}
\int V|u_a|^2 \ge - \eps\int |\nabla u_a|^2 - C_p \eps^{-p/(2-p)}, \quad \forall \eps>0 .
\eq
Combining with Lemma \ref{lem:cEa-lbs}, we find that
\bq \label{eq:lwbbbbbb}
\cE_a(u_a)\ge \Big(1-\frac{a}{a^*}-\eps \Big) \int |\nabla u_a|^2 -C\eps^{-1} - C_p \eps^{-p/(2-p)}, \quad \forall \eps>0.
\eq

From \eqref{eq:lwbbbbbb}, choosing 
$$\eps=\frac{1}{2}\Big( 1-\frac{a}{a^*}\Big)$$
and using $p<1$ (i.e. $\eps^{-p/(2-p)}\ll \eps^{-1}$ for $\eps>0$ small) we obtain the lower bound 
\begin{align*}
\cE(u_a) \ge \frac{1}{2}\Big( 1-\frac{a}{a^*}\Big)  \int |\nabla u_a|^2  - \frac{C}{a^*-a} \ge -C\ell_a.
\end{align*}
Comparing the latter estimate with the upper bound \eqref{eq:weakV-Ea<=} we also obtain 
\begin{align*}
\int_{\R^2}|\nabla u_a|^2 \le C \ell_a^{2}.
\end{align*}

On the other hand, in \eqref{eq:lwbbbbbb} we can choose 
$$\eps=\gamma \Big( 1-\frac{a}{a^*}\Big)$$
for a constant $\gamma>1$ to get
$$
\cE_a(u_a)\ge -(\gamma-1)\Big(1-\frac{a}{a^*} \Big) \int |\nabla u|^2 -\frac{C}{\gamma(a^*-a)}.
$$
If $\gamma$ is sufficiently large, we can use latter estimate and the upper bound \eqref{eq:weakV-Ea<=} to deduce  that 
$$
\int_{\R^2}|\nabla u_a|^2 \ge \frac{\ell_a^{2}}{C}.
$$

Finally, inserting the upper bound $\int |\nabla u|^2\le C\ell_a^2$ into \eqref{eq:Vu-lwb1} and \eqref{eq:VVVV}, then optimizing over $\eps>0$ we have
\bq \label{eq:Vuuuaaa}
-\int V|u_a|^2\le o(\ell_a), 
\eq
(we used $p<1$) and
\bq
\iint \frac{|u_a(x)|^2|u_a(y)|^2} {|x-y|} \d x \d y \le C\ell_a.
\eq
Combining with the upper bound $\cE_a(u_a)\le C\ell_a$ we conclude that
$$
\int |\nabla u_a|^2 -\frac{a}{2}\int |u_a|^4 = \cE_a(u_a) - \int V|u_a|^2 + \iint \frac{|u_a(x)|^2|u(y)|^2} {|x-y|} \d x \d y \le C\ell_a.
$$
\end{proof}

Now we are ready to give

\begin{proof}[Proof of Theorem \ref{thm:main2}] Denote 
\bq
u_a(x)=\ell_a  \varphi_a(\ell_a x).
\eq
Then $\|\varphi_a\|_{L^2}=1$ and by Lemma \ref{lem:lem9}, 
$$
\quad C \ge \int |\nabla \varphi_a|^2 \ge \frac{1}{C}, \quad \int |\nabla \varphi_a|^2 - \frac{a}{2} \int |\varphi_a|^4 \le C\ell_a^{-1}\to 0.
$$
Thus $\varphi_a$ is an optimizing sequence for the Gagliardo-Nirenberg inequality \eqref{eq:GN}. By Lemma \ref{lem:min-GN} in Appendix \ref{app:B}, there exist a subsequence of $\varphi_a$ (still denoted by $\varphi_a$ for simplicity), a sequence $\{x_a\}\subset \R^2$ and a constant $b>0$ such that 
$$
\varphi_a(x+x_a)\to bQ_0(bx)
$$
strongly in $H^1(\R^2)$. 

Now we determine $b$. Since $\varphi_a(x+x_a)\to bQ_0(bx)$ strongly in $H^1(\R^2)$, we have
$$
\int |\nabla \varphi_a|^2 \to b^2 \int |\nabla Q_0|^2 = b^2
$$
and
$$
\iint \frac{|\varphi_a(x)|^2|\varphi_a(y)|^2} {|x-y|} \d x \d y \to b \iint \frac{|Q_0(x)|^2|Q_0(y)|^2} {|x-y|} \d x \d y= \frac{2\beta b}{a^*}.
$$
Combining with \eqref{eq:Vuuuaaa} and the Gagliardo-Nirenberg inequality \eqref{eq:GN}  we have 
\begin{align*}
\cE_a(u_a)&\ge \Big( 1-\frac{a}{a^*}\Big) \int |\nabla u_a|^2 - \iint \frac{|u_a(x)|^2|u_a(y)|^2} {|x-y|} \d x \d y + o(\ell_a) \\
&= \ell_a^2 \Big( 1-\frac{a}{a^*}\Big)  \int |\nabla \varphi_a|^2 - \ell_a \iint \frac{|\varphi_a(x)|^2|\varphi_a(y)|^2} {|x-y|} \d x \d y+ o(\ell_a) \\
&= \frac{\ell_a}{a^*} \Big( b^2 - 2\beta b+o(1)\Big)= - \frac{\ell_a }{a^*} \Big( \beta^2 - (b-\beta)^2 + o(1)\Big).
\end{align*}
Comparing with the upper bound 
$$
\cE_a(u_a)\le -\frac{\ell_a}{a^*}  (\beta^2+o(1))
$$
in Lemma \ref{lem:lem9}, we conclude that $b=\beta$ and 
$$
\cE_a(u_a) = -\frac{\ell_a}{a^*}  (\beta^2+o(1)).
$$
Thus we have proved that
\bq \label{eq:cv-vpQ}
\varphi_a(x+x_a)\to \beta Q_0(\beta x)
\eq
strongly in $H^1(\R^2)$ which is equivalent to \eqref{eq:weakV-cVu} and
$$
E(a)= \cE_a(u_a) (1+o(1)) = -\frac{\ell_a}{a^*}  (\beta^2+o(1)).
$$
which is equivalent to \eqref{eq:Ea-weak}.

Finally, consider the cases when $V(x)=|x|^q$ for $q>0$ or $V(x)=-|x|^{-q}$ for $0<q<1$. Then by Theorem \ref{thm:main1}, $E(a)$ has a minimizer $u_a \ge 0$. Moreover, by the rearrangement inequalities \cite[Chapter 3]{LiLo-01}, we deduce that 
$$
\cE_a(u_a)\ge \cE_a(u_a^*)
$$
where $u_a^*$ is the symmetric decreasing rearrangement of $u_a$, and the equality occurs if and only if $u_a=u_a^*$ (since $|x|^q$ is strictly symmetric increasing and $|x|^{-q}$ is strictly symmetric decreasing). Since $u_a$ is a minimizer for $E(a)$, we conclude that $u_a$ must be radially symmetric decreasing. Consequently, $\varphi_a$ is also radially symmetric decreasing. From the convergence \eqref{eq:cv-vpQ} and the fact that $Q_0$ is radially symmetric decreasing, it is easy to deduce that $\varphi_a\to Q_0$ strongly in $H^1(\R^2)$. This is equivalent to \eqref{eq:limimim}. 

This ends the proof of Theorem \ref{thm:main2}.
\end{proof}

\section{Blow-up: strongly singular potentials} \label{sec:main3}

In this section we prove Theorem \ref{thm:main3}. Recall that $V$ satisfies assumption \eqref{eq:AS-V}, i.e.
$$
V(x)= - h(x) \sum_{j=1}^J  |x-z_j|^{- p_j}, \quad 0<p_j<2, \quad h\in C(\R^2), \quad C\ge  h \ge 0
$$
and 
$$p=\max_{j\in J} p_j\ge 1, \quad h_0=\max\{h(z_j):p_j=p\}>0.$$
Also we denote 
$$\mathcal{Z}=\{x_j: p_j=p, h(x_j)=h_0\}.$$ 

Again, we denote by $u_a$ the approximate minimizer for $E(a)$, i.e. $\cE_a(u_a)/E(a)\to 1$ and write $a\uparrow a^*$ instead of $a_n \uparrow a^*$. Also, we denote  
$$\ell_a=(a^*-a)^{-\frac{1}{2-p}}, \quad \beta = \left(\frac{pa^*A}{2}\right)^{\frac{1}{p-2}}$$
with $A$ defined in \eqref{eq:betaAAA}.

Since $V$ is sufficiently singular, the upper bound in Lemma \ref{lem:1} is not optimal when $a\uparrow a^*$. Instead, we have

\begin{lemma} \label{lem:Ea-up} We have 
\begin{align*}
\limsup_{a\uparrow a^*} E(a) \ell_a^{-p}  \le \inf_{\lambda>0} \left[ \frac{\lambda^2}{a^*} - \lambda^p A \right] = \frac{\beta^2}{a^*}-\beta^p A.
\end{align*}
\end{lemma}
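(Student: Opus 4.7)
The plan is to construct a trial state concentrated at a most singular point $x_0 \in \mathcal{Z}$ of $V$ and evaluate $\cE_a$ on it, following the template of Lemma~\ref{lem:1} but with extra care for the singular part of the potential. Fix $x_0 \in \mathcal{Z}$, pick $\varphi \in C_c^\infty(\R^2)$ with $0 \le \varphi \le 1$, $\varphi \equiv 1$ near $0$ and $\supp \varphi \subset B(0,r)$ for $r$ so small that $x_0 + \supp\varphi$ is disjoint from every $z_j \neq x_0$, and set
\[
u_\ell(x) = A_\ell\,\varphi(x - x_0)\,\ell\, Q_0\big(\ell(x-x_0)\big),
\]
with $A_\ell$ the $L^2$-normalizing factor. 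Exactly as in Lemma~\ref{lem:1}, the exponential decay of $Q_0$ and $\nabla Q_0$ gives $A_\ell = 1 + O(\ell^{-\infty})$, $\int |\nabla u_\ell|^2 = \ell^2 + O(\ell^{-\infty})$, $\int |u_\ell|^4 = (2/a^*)\ell^2 + O(\ell^{-\infty})$, and the Newton term equals $\ell \iint \frac{|Q_0(x)|^2 |Q_0(y)|^2}{|x-y|}\,\d x\,\d y + O(\ell^{-\infty})$.

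The new ingredient is the asymptotics of $\int V |u_\ell|^2$. Writing $V(x) = -h(x)\sum_j |x - z_j|^{-p_j}$ and changing variables $y = \ell(x-x_0)$,
\[
\int V|u_\ell|^2 = -\sum_{j=1}^J A_\ell^2 \int h(x_0 + y/\ell)\,|x_0 + y/\ell - z_j|^{-p_j}\,\varphi(y/\ell)^2\, |Q_0(y)|^2\,\d y.
\]
For $z_j \neq x_0$ the factor $|x_0 + y/\ell - z_j|^{-p_j}$ is uniformly bounded on $\supp \varphi(\cdot/\ell)$ by the choice of $r$, so that summand is $O(1) = o(\ell^p)$. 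For $z_j = x_0$ the integrand carries the singular weight $\ell^{p_j}|y|^{-p_j}$, and since $p_j < 2$ the function $|Q_0(y)|^2/|y|^{p_j}$ is integrable on $\R^2$; dominated convergence then gives $\ell^{p_j} h_0 \int |Q_0(y)|^2/|y|^{p_j}\,\d y + o(\ell^{p_j})$. Summing, only the indices with $p_j = p$ contribute at the dominant order $-\ell^p h_0 \int |Q_0|^2/|y|^p\,\d y$, while the remaining terms are $o(\ell^p)$.

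Setting $\ell = \lambda\, \ell_a$ with $\lambda > 0$ and using $\ell^2(1 - a/a^*) = \lambda^2 \ell_a^p/a^*$, $\ell^p = \lambda^p \ell_a^p$, and $\ell/\ell_a^p = \lambda\, \ell_a^{1-p}$, the Newton term is $o(\ell_a^p)$ when $p > 1$ but of the same order $\ell_a^p$ when $p = 1$; this is exactly why $A$ in \eqref{eq:betaAAA} absorbs the Newton coefficient only in the case $p = 1$. Collecting all pieces one finds
\[
\cE_a(u_\ell) = \ell_a^p\Big(\frac{\lambda^2}{a^*} - \lambda^p A\Big) + o(\ell_a^p),
\]
so the variational principle $E(a) \le \cE_a(u_\ell)$, division by $\ell_a^p$, and taking $\limsup$ as $a \uparrow a^*$ yield $\limsup E(a)\,\ell_a^{-p} \le \lambda^2/a^* - \lambda^p A$ for every $\lambda > 0$. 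The identity $\inf_{\lambda > 0}[\lambda^2/a^* - \lambda^p A] = \beta^2/a^* - \beta^p A$ is then an elementary calculus computation (via $2\lambda/a^* - p A \lambda^{p-1} = 0$). The main obstacle will be the careful handling of the singular integral $\int V|u_\ell|^2$, namely verifying the dominated-convergence step at the concentration point and ensuring that contributions from the non-dominant singularities ($z_j \neq x_0$ or $p_j < p$) are genuinely of lower order than $\ell_a^p$.
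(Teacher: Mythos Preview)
Your proof is correct and follows essentially the same approach as the paper: the same trial state $u_\ell = A_\ell \varphi(\cdot - x_0)\,\ell Q_0(\ell(\cdot - x_0))$ concentrated at a point $x_0 \in \mathcal{Z}$, the same scaling $\ell = \lambda \ell_a$, and the same optimization over $\lambda$. The only technical difference is in handling $\int V|u_\ell|^2$: the paper uses a pointwise upper bound $V(x) \le (\eps - h_0)|x - x_j|^{-p}$ on a small ball (coming from the continuity of $h$ and the lower order of the other singularities), computes the resulting integral up to $O(\ell^{-\infty})$, and sends $\eps \to 0$ at the end, whereas you keep $V$ exact, isolate the $z_j = x_0$ summand, and pass to the limit via dominated convergence using the continuity of $h$ directly---both arguments are valid and yield the same leading-order term.
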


\begin{proof} Let $x_j\in \mathcal{Z}$, i.e. $p_j=p$ and $h(x_j)=h_0$. For every $\eps>0$ there exists $\eta_\eps>0$ such that 
$$
V(x)\le (\eps - h_0) |x-x_j|^{-p}, \quad \forall  |x-x_j|\le 2\eta_\eps.
$$
We choose
$$
u_\ell(x)=A_{\ell} \varphi (x-x_j) \ell Q_0(\ell(x-x_j))
$$
where $0\le \varphi  \in C_c^\infty (\R^2)$ with $\varphi (x)=1$ for $|x|\le \eta_\eps$, $\varphi(x)=0$ for $|x|\ge 2\eta_\eps$, and $A_{\ell}>0$ is a suitable factor to make $\|u_\ell\|_{L^2}=1$. 

Similarly to the proof of Lemma \ref{lem:1}, since $Q_0$ and $|\nabla Q_0|$ decay exponentially we have 
\begin{align*}
A_\ell&=1+O(\ell^{-\infty}),\\
\int |\nabla u_\ell|^2 &= \ell^2 \int |\nabla Q_0|^2 + O(\ell^{-\infty}),\\
\int |u_\ell|^4&=\ell^2 \frac{2}{a^*}\int |\nabla Q_0|^2 + O(\ell^{-\infty})
\end{align*}
and
$$
\iint \frac{|u_\ell(x)|^2 |u_\ell(y)|^2}{|x-y|} \d x \d y = \ell \iint \frac{|Q_0(x)|^2 |Q_0(y)|^2}{|x-y|} \d x \d y + O(\ell^{-\infty}),
$$
Moreover, the choice of $\varphi$ ensures that
$$
V(x) \varphi (x-x_j)  \le  (\eps -  h_0) |x-x_j|^{-p} \chi_{\{|x-x_j|\le \eta_\eps\}}.
$$
Therefore,
\begin{align*}
\int_{\R^2} V |u_\ell|^2 &\le (\eps - h_0)  A_\ell^2 \int_{|x-x_j|\le \eta_\eps}  \frac{|Q_0(\ell(x-x_j))|^2}{|x-x_j|^{p}} \ell^2 \d x  \\
&= (\eps - h_0) \ell^{p} \int_{\R^2}  \frac{|Q_0(x)|^2}{|x|^{p}} \d x + O(\ell^{-\infty}).
\end{align*}
Putting all together, then using $E(a)\le \cE_a(u_\ell)$ and \eqref{eq:Q0} we obtain  
\begin{align*}
E(a) \le \ell^2 \left( 1- \frac{a}{a^*} \right) -  \ell \iint \frac{|Q_0(x)|^2 |Q_0(y)|^2}{|x-y|} \d x \d y +  (\eps - h_0) \ell^{p} \int_{\R^2}  \frac{|Q_0(x)|^2}{|x|^{p}} \d x + O(\ell^{-\infty}).
\end{align*}

Choosing $\ell=\lambda \ell_a=\lambda (a^*-a)^{-\frac{1}{2-p}}$ for a constant $\lambda>0$ we find that
\begin{align*}
E(a)\ell_a^{-p} \le \frac{\lambda^2}{a^*}- \lambda \ell_a^{1-p} \iint \frac{|Q_0(x)|^2 |Q_0(y)|^2}{|x-y|} \d x \d y+ \lambda^p (\eps - h_0) \int_{\R^2}  \frac{|Q_0(x)|^2}{|x|^{p}} \d x + O(\ell_a^{-\infty}).
\end{align*}
Taking the limit $\ell_a\to \infty$, using $p\ge 1$ and then taking $\eps\to 0$, we conclude that
\begin{align} \label{eq:Ea-up-up}
\limsup_{a\uparrow a^*} E(a)\ell_a^{-p} \le \frac{\lambda^2}{a^*} -\lambda^p A, \quad \forall \lambda>0.
\end{align}
It is straightforward to see that the right side of \eqref{eq:Ea-up-up} is smallest when 
$$
\lambda=\beta=\left(\frac{pa^*A}{2}\right)^{\frac{1}{p-2}}
$$
This ends the proof. 
 \end{proof}
 
 Next, we have

\begin{lemma}[A-priori estimates]\label{lem:EK} 
\begin{align*}
C \ell_a^2  \ge \int_{\R^2}|\nabla u_a|^2 \ge \frac{\ell_a^2}{C}, \quad \int |\nabla u_a|^2 -\frac{a}{2}\int |u_a|^4 \le C\ell_a^p, \quad \int V|u_a|^2 \le -\frac{\ell_a^p}{C}.
\end{align*}
\end{lemma}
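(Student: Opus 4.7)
The plan is to mirror the argument of Lemma~\ref{lem:lem9} with the scaling now $\ell_a=(a^*-a)^{-1/(2-p)}$, and with Lemma~\ref{lem:Ea-up} replacing Lemma~\ref{lem:1} as the source of the sharp upper bound
$$
\cE_a(u_a)\le C_1\,\ell_a^p\,(1+o(1)),\qquad C_1:=\frac{\beta^2}{a^*}-\beta^pA<0,
$$
since $C_1$ is the minimum over $\lambda>0$ of $\lambda^2/a^*-\lambda^pA$ with $p<2$.

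For the two-sided bound on $\int|\nabla u_a|^2$, I would control each singular piece of $V$ via Lemma~\ref{lem:Hyd}: because $p_j\le p<2$, the worst power in $\eps^{-p_j/(2-p_j)}$ comes from the terms with $p_j=p$, so combining with Lemma~\ref{lem:cEa-lbs} produces
$$
\cE_a(u_a)\ge \Big(1-\frac{a}{a^*}-2\eps\Big)\int|\nabla u_a|^2 - C\,\eps^{-p/(2-p)} - C\,\eps^{-1},\qquad\forall\eps>0.
$$
Taking $\eps=\tfrac14(1-a/a^*)$ turns both error terms into $O(\ell_a^p)$ (here $p\ge 1$ is used: $\eps^{-1}=O(\ell_a^{2-p})=O(\ell_a^p)$), and matching with the upper bound on $\cE_a(u_a)$ gives $\int|\nabla u_a|^2\le C\ell_a^2$. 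For the matching lower bound, I would instead take $\eps=\gamma(1-a/a^*)$ with $\gamma$ large enough that the constant in front of $-\ell_a^p$ coming from the error terms is smaller than $|C_1|/2$; rearranging then yields $\int|\nabla u_a|^2\ge\ell_a^2/C$.

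The remaining two bounds use the identity
$$
\int|\nabla u_a|^2-\frac{a}{2}\int|u_a|^4=\cE_a(u_a)-\int V|u_a|^2+\iint\frac{|u_a(x)|^2|u_a(y)|^2}{|x-y|}\,\d x\,\d y.
$$
Inserting $\int|\nabla u_a|^2\le C\ell_a^2$ into Lemma~\ref{lem:Hyd} and into \eqref{eq:VVVV} (each optimized in $\eps$) yields $-\int V|u_a|^2\le C\ell_a^p$ and $\iint\le C\ell_a\le C\ell_a^p$; together with $\cE_a(u_a)\le C_1\ell_a^p$ this proves the middle inequality. For the sign bound, the Gagliardo--Nirenberg inequality combined with the kinetic lower bound gives
$\int|\nabla u_a|^2-\tfrac{a}{2}\int|u_a|^4\ge(1-a/a^*)\int|\nabla u_a|^2\ge\ell_a^p/(Ca^*)$,
and the identity then produces
$$
\int V|u_a|^2\le C_1\ell_a^p-\frac{\ell_a^p}{Ca^*}+C\ell_a,
$$
which is $\le-\ell_a^p/C'$ whenever $p>1$ because then $\ell_a=o(\ell_a^p)$.

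The main obstacle is the borderline case $p=1$, where $V$ and the self-gravitating term contribute at the same scale $\ell_a$, so the direct argument above only yields $(C_1-1/(Ca^*)+C)\ell_a$ whose sign is not manifestly negative. Here one must exploit the fine structure of Lemma~\ref{lem:Ea-up}: when $p=1$, $A=h_0\int|Q_0|^2/|x|\,\d x+\iint|Q_0|^2|Q_0|^2/|x-y|\,\d x\,\d y$ strictly exceeds the purely gravitational constant $\iint|Q_0|^2|Q_0|^2/|x-y|\,\d x\,\d y$, so the upper bound $C_1\ell_a$ is strictly more negative than any bound obtainable from the Coulomb term alone. A contradiction argument against the $V$-free energy estimate of Theorem~\ref{thm:main2} then forces $\int V|u_a|^2\le -\ell_a/C$ in this case too.
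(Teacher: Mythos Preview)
Your argument follows the paper's proof closely for the two-sided kinetic bound and for $\int|\nabla u_a|^2-\tfrac{a}{2}\int|u_a|^4\le C\ell_a^p$: both rely on Lemma~\ref{lem:Hyd}, Lemma~\ref{lem:cEa-lbs}, and the upper bound from Lemma~\ref{lem:Ea-up}, with the same choices $\eps\sim(1-a/a^*)$. For $\int V|u_a|^2\le -\ell_a^p/C$ the paper works directly from \eqref{eq:cEu>-} (which already absorbs the Coulomb term into the $C/\eps$), whereas you unpack the energy identity explicitly; the content is the same.

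Where you go beyond the paper is at $p=1$. The paper's written argument for $\int V|u_a|^2\le -\ell_a^p/C$ is terse and does not isolate this case; you correctly observe that when $p=1$ the Coulomb contribution $C\ell_a$ and the energy upper bound $C_1\ell_a^p$ sit at the same scale, so negativity is not automatic from the crude bounds alone. Your proposed fix via comparison with the $V$-free problem is the right idea and can be made precise as follows: since $\int V|u_a|^2=\cE_a(u_a)-\cE_a^0(u_a)\le \cE_a(u_a)-E^0(a)$, and since for $p=1$ one has $\cE_a(u_a)\le(-\tfrac{a^*}{4}A^2+o(1))\ell_a$ while $E^0(a)=(-\tfrac{a^*}{4}B^2+o(1))\ell_a$ with $B=\iint|Q_0(x)|^2|Q_0(y)|^2/|x-y|\,\d x\,\d y$, one gets $\int V|u_a|^2\le(-\tfrac{a^*}{4}(A^2-B^2)+o(1))\ell_a$, strictly negative because $A=h_0\int|Q_0|^2/|x|+B>B$. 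One point to sharpen: you need the \emph{lower} bound $E^0(a)\ge(-\tfrac{a^*}{4}B^2+o(1))\ell_a$, which is established in the proof of Theorem~\ref{thm:main2} (the full limit shown there), not merely the $\limsup$ stated in \eqref{eq:Ea-weak}; make that dependence explicit when you write it up.
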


\begin{proof} The proof strategy is similar to Lemma \ref{lem:lem9}. By the choice of $V$, we have 
$$
V(x) \ge-C \sum_{j\in J} \frac{1}{|x-z_j|^p} -C.
$$
Therefore, 
\bq \label{eq:Vu-lwb2}
\int V|u_a|^2 \ge - \eps\int |\nabla u_a|^2 - C_p \eps^{-p/(2-p)}, \quad \forall \eps>0 
\eq
by Lemma \ref{lem:Hyd} and 
\bq \label{eq:lwbb}
\cE_a(u_a)\ge \Big(1-\frac{a}{a^*}-\eps \Big) \int |\nabla u_a|^2 -C\eps^{-1} - C_p \eps^{-p/(2-p)}, \quad \forall \eps>0.
\eq
by Lemma \ref{lem:cEa-lbs}. Choosing 
$$\eps=\frac{1}{2}  \Big(1-\frac{a}{a^*} \Big)$$
and using the upper bound in Lemma \ref{lem:Ea-up} $p\in [1,2)$ we get (with  $p\in [1,2)$)
$$
-\frac{\ell_a^p}{C}  \ge E(a)(1+o(1)) \ge \cE_a(u_a) \ge C\ell_a^{p-2} \int |\nabla u_a|^2 - C_p \ell_a^p 
$$
Thus
$$
\int |\nabla u_a| \le C\ell_a^2. 
$$
Moreover, in \eqref{eq:lwbb} choosing  
$$
\eps=\gamma  \Big(1-\frac{a}{a^*} \Big)
$$
with $\gamma$ sufficiently large, we get
$$
\int |\nabla u_a| \ge \frac{\ell_a^2}{C}.
$$

Next, by \eqref{eq:cEu>-},
$$
\int_{\R^2} V |u_a|^2 \le \cE_a(u_a) - \left(1-\frac{a}{a^*}-\eps \right) \int_{\R^2} |\nabla u_a|^2 +C\eps^{-1}.
$$ 
Choosing again 
$$
\eps=\gamma  \Big(1-\frac{a}{a^*} \Big)
$$
with $\gamma$ sufficiently large, then using  $\cE_a(u_a)\le -\ell_a^p/C$ and $\int |\nabla u_a|^2\le C\ell_a^2$, we obtain
$$
\int_{\R^2} V |u_a|^2 \le -\frac{\ell_a^p}{C}.
$$

Finally, inserting the bound $
\int |\nabla u_a| \le C\ell_a^2
$ into \eqref{eq:Vu-lwb2} and \eqref{eq:VVVV}, then optimizing over $\eps>0$ we have
$$
\int V|u_a|^2 \ge -C\ell_a^p, \quad \iint \frac{|u_a(x)|^2 |u_a(y)|^2}{|x-y|} \d x \d y \le C \ell_a
$$
Thus
$$
\int |\nabla u_a|^2 -\frac{a}{2}\int |u_a|^4 = \cE_a(u_a)-\int V|u_a|^2 + \iint \frac{|u_a(x)|^2 |u_a(y)|^2}{|x-y|} \d x \d y \le C \ell_a^p.
$$
\end{proof}

Now we go to

\begin{proof}[Proof of Theorem \ref{thm:main3}] From the estimate $\int V|u_a|^2\le -\ell_a^p/C$ in Lemma \ref{lem:EK} and the simple bound 
$$
V(x) \ge -  C \sum_{j\in J} \frac{1}{|x-z_{j}|^{p_j}} 
$$
we obtain that, up to a subsequence of $u_a$, there exists $i\in \{1,2,...,J\}$ such that 
\bq \label{eq:varphi-V}
\int_{\R^2} \frac{|u_a(x)|^2}{ |x-z_i|^{p_{i}}} \d x \ge \frac{\ell_a^p}{C}.
\eq
Define  
\bq
u_a(x)=\ell_a \varphi_a(\ell_a (x -z_{i})).
\eq
By Lemma \ref{lem:EK}, $\varphi_a$ is bounded in $H^1(\R^2)$, $\|\varphi_a\|_{L^2}=1$ and
$$
\int |\nabla \varphi_a|^2 \ge \frac{1}{C}, \quad \int |\nabla \varphi_a|^2 - \frac{a}{2} \int |\varphi_a|^4 \le C\ell_a^{p-2} \to 0.
$$
Thus $\varphi_a$ is an optimizing sequence for the Gagliardo-Nirenberg inequality \eqref{eq:GN}. By Lemma \ref{lem:min-GN} in Appendix \ref{app:B}, up to a subsequence of $\varphi_a$, there exist a sequence $\{x_a\}\subset \R^2$ and a constant $b>0$ such that 
$$
\varphi_a(x+x_a)\to bQ_0(bx)\quad \text{strongly in~}H^1(\R^2).
$$

Next, we prove $ p_i=p$. From \eqref{eq:varphi-V}, we have 
$$
\ell_{a}^{p_i}\int_{\R^2} \frac{|\varphi_a(x)|^2}{ |x|^{p_i}} \d x \ge \frac{\ell_a^p}{C}.
$$
This implies that $ p_i=p$ because $\ell_a\to \infty$ and $\int |\varphi_a|^2/|x|^{p_i}$ is bounded (as $\varphi_a$ is bounded in $H^1(\R^2)$). 

\medskip

Moreover, since $\varphi_a(x+x_a)\to bQ_0(bx)$ in $H^1(\R^2)$ and $1\le p<2$, we find that
$$
\frac{1}{C} \le \int \frac{|\varphi_a(x)|^2}{ |x|^{p}} \d x =\int \frac{|\varphi_a(x+x_a)|^2}{ |x+x_a|^{p}} \d x  = \int \frac{|bQ_0(bx)|^2}{ |x+x_a|^{p}} \d x + o(1).
$$
Since $Q_0$ decays exponentially, we conclude that $x_a$ is bounded. Up to a subsequence, we can assume that $x_a\to x_\infty$. Thus we have
$$
\varphi_a(x+x_\infty)\to bQ_0(bx)\quad \text{strongly in~} H^1(\R^2).
$$

Finally, we determine $x_\infty$ and $b$. This will be done by considering the sharp lower bound for $E(a)$. By the assumptions on $V$ and $p=p_i\ge p_j$, we have
$$
V(x) \ge -\frac{\eps+h(x_i)}{|x-z_i|^{p}} - C \sum_{j\ne i} \frac{1}{|x-z_j|^{p}} - C_\eps, \quad \forall \eps>0.
$$
Therefore, using $u_a(x)=\ell_a \varphi_a(\ell_a (x -z_{i}))$ we get
\begin{align*}
\int V|u_a|^2 &= \int V(\ell_a^{-1}x+z_i) |\varphi_a(x)|^2 \d x\\
&\ge - \ell_a^p (\eps+h(x_i)) \int \frac{|\varphi_a(x)|^2}{|x|^p} \d x + \sum_{j\ne i} \int \frac{|\varphi_a(x)|^2}{|\ell_a^{-1} x + z_i-z_j|^p} \d x -C_\eps.
\end{align*}
Using $\varphi_a(x)\to bQ_0(b(x-x_\infty))$ in $H^1(\R^2)$ and $Q_0$ decays exponentially, we  have
$$\int \frac{|\varphi_a(x)|^2}{|x|^p} \d x= b^p \int \frac{|Q_0(x-bx_\infty)|^2}{|x|^p} \d x+ o(1)$$
and 
$$
\int \frac{|\varphi_a(x)|^2}{|\ell_a^{-1} x + z|^p} \d x = o(\ell_a^p), \quad \forall z\ne 0.
$$
Thus
\begin{align} \label{fn1}
\int V|u_a|^2 \ge - \ell_a^pb^p (\eps+h(x_i)) \int \frac{|Q_0(x-bx_\infty))|^2}{|x|^p} \d x + o(\ell_a^p)-C_\eps, \quad \forall \eps>0.
\end{align}

Moreover, using again $u_a(x)=\ell_a \varphi_a(\ell_a (x -z_{i}))$ and $\varphi_a(x)\to bQ_0(b(x-x_\infty))$ in $H^1(\R^2)$, we have
\bq  \label{fn2}
\int |\nabla u_a|^2= \ell_a^2 b^2 \int |\nabla Q_0|^2 + o(\ell_a^2) =\ell_a^2 b^2 +o(\ell_a^2)+o(1)
\eq
and
\bq  \label{fn3}
\iint \frac{|u_a(x)|^2 |u_a(y)|^2}{|x-y|} \d x \d y = \ell_a b \iint \frac{|Q_0(x)|^2 |Q_0(y)|^2}{|x-y|} \d x \d y + o(\ell_a).
\eq

In summary, from the Gagliardo-Nirenberg inequality \eqref{eq:GN} and \eqref{fn1}, \eqref{fn2}, \eqref{fn3}  we have
\begin{align*}
\cE_a(u_a)&\ge \Big( 1-\frac{a}{a^*}\Big) \int |\nabla u_a|^2 + \int V|u_a|^2 - \iint \frac{|u_a(x)|^2|u_a(y)|^2} {|x-y|} \d x \d y \\
&= \frac{ \ell_a^p b^2}{a^*} - \ell_a^p b^p (\eps+h(x_i)) \int \frac{|Q_0(x-bx_\infty)|^2}{|x|^p} \d x \\
&\quad - \ell_a b \iint \frac{|Q_0(x)|^2 |Q_0(y)|^2}{|x-y|} \d x \d y +o(\ell_a^p)-C_\eps.
\end{align*}
Here we have used $p\ge 1$, so that $\ell_a^p\ge \ell_a$. Taking the limit $\ell_a\to \infty$, then taking $\eps\to 0$ and using the assumption $\cE_a(u_a)/E(a)\to 1$ we obtain
\begin{align} \label{eq:Ea-lb-sharp}
\liminf_{a\to a^*}E(a)\ell_a^{-p}& \ge \frac{b^2}{a^*} - b^p B 
\end{align}
where
\begin{equation} \label{eq:betaB}
B=
\begin{cases} 
\, h(x_i)\int_{\R^2}  \frac{|Q_0(x-bx_\infty)|^2}{|x|^{p}} \d x + \iint \frac{|Q_0(x)|^2 |Q_0(y)|^2}{|x-y|} \d x \d y , &\quad \text{if}\,\, p=1\\
\,h(x_i) \int_{\R^2}  \frac{|Q_0(x-bx_\infty)|^2}{|x|^{p}} \d x. &\quad \text{if}\,\, p>1 \\
\end{cases}
\end{equation}

Finally, using $h(x_i)\le h_0$ and the rearrangement inequality, we have  
$$
h(x_i) \int_{\R^2}  \frac{|Q_0(x-bx_\infty)|^2}{|x|^{p}} \d x \le h_0 \int_{\R^2}  \frac{|Q_0(x)|^2}{|x|^{p}} \d x 
$$
where the equality occurs if and only if $h(x_i)=h_0$ and $x_\infty=0$ (here $Q_0$ is symmetric decreasing and $|x|^{-p}$ is strictly symmetric decreasing). Thus $B\le A$ and hence \eqref{eq:Ea-lb-sharp} implies that
$$
\liminf_{a\to a^*}E(a)\ell_a^{-p} \ge \frac{b^2}{a^*} - b^p A. 
$$
Comparing the latter estimate with the upper bound in Lemma \ref{lem:Ea-up}, we conclude that $b=\beta$ and $A=B$, i.e. $h(x_i)=h_0$ and $x_\infty=0$. Thus $x_i\in \mathcal{Z}$ and 
$$
\varphi_a(x)\to \beta Q_0(\beta x) \quad \text{strongly in~} H^1(\R^2)
$$ 
which is equivalent to \eqref{eq:cv-p>=1}. If $\mathcal{Z}$ has a unique element, then we obtain the convergence for the whole sequence $u_a$ by a standard argument. 

The proof is complete.
\end{proof}

\appendix 

\section{Appendix} \label{sec:app}

\subsection{Concentration-compactness lemma}

For the reader's convenience, we recall the following fundamental result of Lions \cite{Lions-84,Lions-84b}.

\begin{lemma}[Concentration-compactness]\label{lem:concom} Let $N\ge 1$. Let $\{u_n\}$ be a bounded sequence in $H^1(\R^N)$ with $\|u_n\|_{L^2}=1$. Then there exists a subsequence (still denoted by $\{u_n\}$ for simplicity) such that  one of the following cases occurs:

\begin{itemize}

\item[(i)] {\rm (Compactness)} There exists a sequence $\{z_n\}\subset \R^N$ such that $u_n(.+z_n)$ converges to a function $u_0$ weakly in $H^1(\R^N)$ and strongly in $L^p(\R^N)$ for all $p\in [2,2^*)$. 
\item[(ii)] {\rm (Vanishing)} $u_n\to 0$ strongly in $L^p$ for all $p\in (2,2^*)$. 
\item[(iii)] {\rm (Dichotomy)} There exist $\lambda\in (0,1)$ and two sequences $\{u_n^{(1)}\}$, $\{u_n^{(2)}\}$ in $H^1(\R^N)$ such that 
\[ 
\left \{
\begin{aligned}
  & \lim_{n\to \infty} \int_{\R^N} |u_n^{(1)}|^2 = \lambda,  \quad \lim_{n\to \infty} \int_{\R^N} |u_n^{(2)}|^2 = 1-\lambda,  \\
  & \lim_{n\to \infty} {\rm dist} ({\rm supp} (u_n^{(1)}), {\rm supp} (u_n^{(2)})) =+\infty;\\
  & \lim_{n\to \infty} \|u_n - u_n^{(1)}-u_n^{(2)}\|_{L^p} = 0  , \quad  \forall p\in [2,2^*); \\
  & \liminf_{n\to \infty} \int_{\R^N} (|\nabla u_n|^2 - |\nabla u_n^{(1)}|^2 - |\nabla u_n^{(2)}|^2) \ge 0.
  \end{aligned}
 \right.
\]
\end{itemize}

\end{lemma}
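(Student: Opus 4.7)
The plan is to follow Lions' classical argument via the Lévy concentration function. Define
$$ Q_n(t) = \sup_{y \in \R^N} \int_{B(y,t)} |u_n|^2 \, \d x, \qquad t \ge 0. $$
Each $Q_n : [0,\infty) \to [0,1]$ is non-decreasing with $Q_n(t) \to 1$ as $t \to \infty$. By Helly's selection theorem applied to this uniformly bounded family of monotone functions, pass to a subsequence so that $Q_n \to Q$ pointwise on $[0,\infty)$, with $Q$ non-decreasing and $[0,1]$-valued. Set $\lambda = \lim_{t \to \infty} Q(t) \in [0,1]$. The three cases of the lemma correspond exactly to $\lambda = 0$, $\lambda = 1$, and $\lambda \in (0,1)$.

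If $\lambda = 0$, then $\sup_y \int_{B(y,t)} |u_n|^2 \to 0$ for each fixed $t$. Covering $\R^N$ by balls of radius $t$ with bounded overlap and combining the local $L^2$ smallness with the local Gagliardo--Nirenberg bound $\|u\|_{L^q(B)}^q \le C \|u\|_{L^2(B)}^{q(1-\theta)} \|u\|_{H^1(B)}^{q\theta}$ for appropriate $\theta \in (0,1)$ yields $u_n \to 0$ in $L^p(\R^N)$ for all $p \in (2,2^*)$; this is the vanishing case. If $\lambda = 1$, then for every $\eps > 0$ there is $R_\eps$ and a sequence $z_n^{(\eps)}$ with $\int_{B(z_n^{(\eps)}, R_\eps)} |u_n|^2 > 1 - \eps$ eventually. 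A standard diagonal extraction (with $\eps = 1/k$) produces a single sequence $\{z_n\}$ so that $\tilde u_n = u_n(\cdot + z_n)$ is tight in $L^2(\R^N)$. Extracting a further subsequence with $\tilde u_n \wto u_0$ in $H^1$, the local Rellich--Kondrachov theorem together with tightness upgrades this to strong convergence in $L^2(\R^N)$, and interpolation with the uniform $H^1$ bound gives strong convergence in $L^p$ for every $p \in [2,2^*)$.

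The delicate case is $\lambda \in (0,1)$. For each integer $k$, select $R_k$ large enough that $Q(R_k) > \lambda - 1/k$, and choose $z_{n,k} \in \R^N$ with $\int_{B(z_{n,k}, R_k)} |u_n|^2 > \lambda - 2/k$ for $n$ large. The crucial technical step is to pick an intermediate radius $\rho_{n,k}$ with $\rho_{n,k} - R_k \to \infty$ but such that the annular mass $\int_{R_k \le |x - z_{n,k}| \le \rho_{n,k}} |u_n|^2 \to 0$; existence of such $\rho_{n,k}$ follows by a mean-value/pigeonhole argument applied to the monotone function $r \mapsto \int_{B(z_{n,k}, r)} |u_n|^2$ using that its limit (bounded by $1$) is close to $\lambda$ by the definition of $Q$, so the increment on a long interval can be made arbitrarily small. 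Choose smooth cutoffs $\chi^{(1)}_{n,k}$ supported in $B(z_{n,k}, (R_k + \rho_{n,k})/2)$ equal to $1$ on $B(z_{n,k}, R_k)$, and $\chi^{(2)}_{n,k}$ supported outside $B(z_{n,k}, \rho_{n,k})$, with $\|\nabla \chi^{(i)}_{n,k}\|_\infty = O(1/(\rho_{n,k} - R_k))$; set $u_n^{(i)} = \chi^{(i)}_{n,k(n)} u_n$ for a diagonal choice $k(n) \to \infty$.

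The four claimed properties follow: the masses tend to $\lambda$ and $1-\lambda$ by the choice of $R_k$ and the vanishing annular mass; the supports are separated by $(\rho_{n,k(n)} - R_{k(n)})/2 \to \infty$; the $L^2$ remainder $\|u_n - u_n^{(1)} - u_n^{(2)}\|_{L^2}$ is controlled by the annular mass and hence tends to zero, then interpolation with the $H^1$ bound extends this to all $p \in [2,2^*)$; and the gradient subadditivity follows by expanding $|\nabla u_n|^2$ and bounding the cross and cutoff-derivative terms using the disjoint supports of $\nabla \chi^{(1)}_{n,k}$ and $\nabla \chi^{(2)}_{n,k}$ together with their vanishing $L^\infty$ norms. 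The main obstacle is the construction of $\rho_{n,k}$ and the diagonal choice $k(n)$: this has to be done so that all four dichotomy properties hold simultaneously along a single subsequence, which requires combining the annular-mass vanishing with the gradient cutoff estimate in a consistent manner. Once the geometric selection is in place, the remaining verifications are routine applications of Sobolev embedding, Rellich--Kondrachov, and standard interpolation.
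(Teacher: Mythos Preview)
Your proposal is correct and follows exactly Lions' classical Lévy concentration function argument, which is precisely what the paper invokes: the paper's own ``proof'' simply cites \cite{Lions-84,Lions-84b} and explains how the stated formulation follows from Lions' original statements (tightness $\Rightarrow$ (i) via Rellich, vanishing $\Rightarrow$ $L^p$-smallness via \cite[Lemma I.1]{Lions-84b}, dichotomy via \cite[Lemma III.1]{Lions-84} plus a diagonal argument). In other words, you have written out in detail what the paper merely cites, so the approaches are identical.
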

Here $2^*$ is the critical power in Sobolev's embedding, i.e. $2^*=2N/(N-2)$ if $N\ge 3$ and $2^*=+\infty$ if $N\le 2$.  
\begin{proof} The result is essentially taken from  \cite[Lemma III.1]{Lions-84}, with some minor modifications that we will explain.

First, the original notion of the compactness case in \cite[Lemma I.1]{Lions-84} reads
\bq \label{eq:compactness}
\lim_{R\to \infty} \int_{|x|\le R} |u_n(x+x_n)|^2 \d x =1.
\eq
However, since $u_n$ is bounded in $H^1(\R^N)$, the statement (i) follows from \eqref{eq:compactness} easily. 

Next, the original notion of the vanishing case in \cite[Lemma I.1]{Lions-84} reads
$$
\lim_{R\to \infty} \sup_{y\in \R^N} \int_{|x|\le R} |u_n(x+y)|^2 \d x =0.
$$
This and the boundedness in $H^1$ implies that $u_n\to 0$ strongly in $L^p(\R^N)$ for all $p\in (2,2^*)$, as explained in \cite[Lemma I.1]{Lions-84b}.

Finally, in the dichotomy case, (iii) follows from \cite[Lemma III.1]{Lions-84} (the original statement has a parameter $\eps\to 0$) and a standard Cantor's diagonal argument. 
\end{proof} 

\subsection{Optimizing sequences of Gagliardo-Nirenberg inequality} \label{app:B}

In this section we recall 
\begin{lemma}[Compactness of optimizing sequences for Gagliardo-Nirenberg inequality] \label{lem:min-GN} If $f_n \ge 0$ is a bounded sequence in $H^1(\R^2)$, $\|f_n\|_{L^2}=1$ and 
$$
\int |\nabla f_n|^2 -\frac{a^*}{2}\int |f_n|^4 \to 0, \quad \liminf \|\nabla f_n\|_{L^2} >0,
$$
then up to subsequences and translations, $f_n$ converges strongly in $H^1(\R^2)$ to $bQ_0(bx)$ for some constant $b>0$.
\end{lemma}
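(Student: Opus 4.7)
The plan is to run the concentration-compactness alternative of Lemma \ref{lem:concom} on $\{f_n\}$ and to rule out the vanishing and dichotomy branches using the assumption that the Gagliardo-Nirenberg deficit tends to $0$; then the limit is identified as an optimizer of \eqref{eq:GN}, which is $b Q_0(b\cdot)$ up to translation by the uniqueness statement after \eqref{eq:a*-Q}.

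First, I would rule out vanishing. If $f_n\to 0$ strongly in $L^p$ for some $p\in(2,\infty)$ (equivalently, by interpolation with the $L^2$ bound, in $L^4$), then $\int |f_n|^4\to 0$, and the hypothesis
\[
\int |\nabla f_n|^2-\frac{a^*}{2}\int |f_n|^4\to 0
\]
would force $\|\nabla f_n\|_{L^2}\to 0$, contradicting $\liminf \|\nabla f_n\|_{L^2}>0$.

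Next, I would rule out dichotomy. Suppose $f_n^{(1)}$, $f_n^{(2)}$ split the mass $\lambda$ and $1-\lambda$ at infinite distance. Since the supports are eventually disjoint and $\|f_n-f_n^{(1)}-f_n^{(2)}\|_{L^4}\to 0$,
\[
\int |f_n|^4=\int |f_n^{(1)}|^4+\int |f_n^{(2)}|^4+o(1),\qquad \int |\nabla f_n|^2\ge \int|\nabla f_n^{(1)}|^2+\int|\nabla f_n^{(2)}|^2+o(1).
\]
Applying the $L^2$-scaled form of \eqref{eq:GN}, namely $\int|\nabla u|^2\ge \frac{a^*}{2\|u\|_{L^2}^2}\int|u|^4$, to each piece with masses tending to $\lambda_1=\lambda$ and $\lambda_2=1-\lambda$ yields
\[
0=\lim\Big(\int |\nabla f_n|^2-\frac{a^*}{2}\int |f_n|^4\Big)\ge \liminf \Big(\frac{a^*(1-\lambda_1)}{2\lambda_1}\int |f_n^{(1)}|^4+\frac{a^*(1-\lambda_2)}{2\lambda_2}\int |f_n^{(2)}|^4\Big).
\]
Since $\lambda\in(0,1)$, both coefficients are strictly positive, so $\int|f_n^{(i)}|^4\to 0$ for $i=1,2$, hence $\int|f_n|^4\to 0$, and we reach the same contradiction as in the vanishing step. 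The precise control of the sub-additivity error in the Gagliardo-Nirenberg deficit is the step that requires the most care, but it only needs the $L^2$-rescaling of \eqref{eq:GN} and the disjoint-support decomposition above.

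By Lemma \ref{lem:concom}, the compactness case must occur: there exist $\{z_n\}\subset\R^2$ and $f_0\in H^1(\R^2)$ such that $\tilde f_n:=f_n(\cdot+z_n)\rightharpoonup f_0$ in $H^1$ and $\tilde f_n\to f_0$ strongly in $L^p$ for all $p\in[2,\infty)$. In particular $\|f_0\|_{L^2}=1$, $\int|\tilde f_n|^4\to\int|f_0|^4$, and $\tilde f_n\ge 0$ implies $f_0\ge 0$. Weak lower semi-continuity of the Dirichlet integral, together with the vanishing deficit, gives
\[
\int|\nabla f_0|^2\le \liminf\int|\nabla \tilde f_n|^2=\frac{a^*}{2}\int|f_0|^4,
\]
while \eqref{eq:GN} applied to $f_0$ gives the reverse inequality. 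Thus $f_0$ is a nonnegative optimizer of \eqref{eq:GN}, and by the uniqueness statement recalled after \eqref{eq:a*-Q} it equals $bQ_0(b(x-x_0))$ for some $b>0$ and $x_0\in\R^2$. Replacing $z_n$ by $z_n+x_0$, we may assume $x_0=0$. Finally, equality of limits $\int|\nabla\tilde f_n|^2\to\int|\nabla f_0|^2$ together with weak $H^1$ convergence upgrades to strong $H^1$ convergence, which is the claimed conclusion.
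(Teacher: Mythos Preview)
Your proof is correct and follows essentially the same route as the paper's: apply Lemma~\ref{lem:concom}, rule out vanishing via $\|f_n\|_{L^4}\to 0 \Rightarrow \|\nabla f_n\|_{L^2}\to 0$, rule out dichotomy by applying the rescaled Gagliardo--Nirenberg inequality to each piece, and then identify the limit as an optimizer in the compactness case. Your handling of the dichotomy step---computing the deficit lower bound $\sum_i \frac{a^*(1-\lambda_i)}{2\lambda_i}\int|f_n^{(i)}|^4$ directly and concluding $\int|f_n^{(i)}|^4\to 0$---is in fact slightly cleaner than the paper's version, which bounds the kinetic term by $\frac{a^*}{2}\min(\lambda^{-1},(1-\lambda)^{-1})\int|f_n|^4$ to reach the same contradiction; but the underlying idea is identical.
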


This is a standard result, see e.g. \cite{BFV-14} and the references theirin. Since the result can be proved easily by the Concentration-Compactness Lemma \ref{lem:concom}, let us provided it below for the reader's convenience. 
\begin{proof} We apply  to the sequence $f_n$.

\medskip

{\bf No-vanishing.} If the vanishing case occurs, then $\|f_n\|_{L^4} \to 0$, but this contradicts to the assumption  in Lemma \ref{lem:min-GN}.

\medskip

{\bf No-dichotomy.} Assume that the dichotomy case occurs. Then we can find two sequences $f_n^{(1)}$, $f_n^{(2)}$ such that
$$
\int |f_n^{(1)}|^2 \to \lambda \in (0,1), \quad \int |f_n^{(2)}|^2\to 1-\lambda,
$$
$$
\int |\nabla f_n|^2 \ge  \int |\nabla f_n^{(1)}|^2  + \int |\nabla f_n^{(2)}|^2 + o(1)
$$
and
$$
\int |f_n|^4 =  \int |f_n^{(1)}|^4 + \int |\nabla f_n^{(2)}|^4+o(1).
$$

On the other hand, using the Gagliardo-Nirenberg inequality \eqref{eq:GN} for $f_n^{(j)}/\|f_n^{(j)}\|_{L^2}$  we obtain
$$
\int |\nabla f_n^{(1)}|^2 \ge \frac{a^*}{2\lambda} \int |f_n^{(1)}|^4, \quad \int |\nabla f_n^{(2)}|^2 \ge \frac{a^*}{2(1-\lambda)} \int |f_n^{(2)}|^4 .
$$ 
Thus
\begin{align*}
\int |\nabla f_n|^2 &\ge   \int |\nabla f_n^{(1)}|^2  + \int |\nabla f_n^{(2)}|^2 + o(1) \\
&\ge \frac{a^*}{2\lambda} \int |f_n^{(1)}|^4 +  \frac{a^*}{2(1-\lambda)} \int |f_n^{(2)}|^4 + o(1)\\
&\ge \frac{a^*}{2} \max \Big( \frac{1}{\lambda}, \frac{1}{1-\lambda} \Big) \Big(  \int |f_n^{(1)}|^4 + \int |\nabla f_n^{(2)}|^4  \Big) +o(1)\\
&=  \frac{a^*}{2} \max \Big( \frac{1}{\lambda}, \frac{1}{1-\lambda} \Big) \int |f_n|^4 + o(1).
 \end{align*}
However, this again contradicts to the assumption in Lemma \ref{lem:min-GN} since $\lambda\in (0,1)$. Thus the dichotomy case does not occur. 

\medskip
{\bf Compactness.} Now up to subsequences and translations, $f_n$ converges to a function $f$ weakly in $H^1(\R^2)$ and strongly in $L^s(\R^2)$ for all $s\in [2,\infty)$. Therefore, $\|f\|_{L^2}=1$ and
$$ \int |\nabla f_n|^2 \ge \int |\nabla f|^2+ o(1), \quad  \int |f_n|^4 \to \int |f|^4 . $$
On the other hand, by the assumption in Lemma \ref{lem:min-GN}, we have
$$
 \int |\nabla f|^2 - \frac{a^*}{2} \int |f|^4 \le \int |\nabla f_n|^2 - \frac{a^*}{2}  \int |f_n|^4  \to 0.
$$
Thus $f$ is an optimizer for the Gagliardo-Nirenberg inequality \eqref{eq:GN}. Moreover, we must have $\int |\nabla f_n|^2 \to \int |\nabla f|^2$, i.e. $f_n$ converges to $f$ strongly in $H^1(\R^2)$.

Finally since $Q_0$ is the unique optimizer for \eqref{eq:GN} up to translations and dilations, we have $f(x+z_0)=bQ_0(bx)$ for constants $b>0$ and $z_0\in \R^2$. Thus $f_n(x+z_0)\to f(x+z_0)=bQ_0(bx)$ in $H^1(\R^2)$. This ends the proof.
\end{proof}

\end{document}